\numberwithin{equation}{section}
\def\refer#1#2#3#4#5{#1:\ {\sl #2}\ {\bf #3}\ {(#4)}\ #5;\ }
\def\rref#1~{~(\ref{#1})}
\def\bR{\mathbb{R}}
\definecolor{green}{rgb}{0,1,0.5}
\definecolor{darkblue}{rgb}{0,0,0.5}
\def\blu#1~{{{~\color{blue}{#1}}}}
\def\red#1~{{{~\color{red}{#1}}}}
\def\gre#1~{{{~\color{green}{#1}}}}
\def\blu#1~{{~\color{blue}{#1}}}
\def\yel#1~{{{~\color{yellow}{#1}}}}
\def\redb#1~{{{~\colorbox{red}{#1}}}}
\def\greb#1~{{{~\colorbox{green}{#1}}}}
\def\blub#1~{{~\colorbox{blue}{#1}}}
\def\yelb#1~{{~\colorbox{yellow}{#1}}}
\newcommand{\bequ}{\begin{equation}}
\newcommand{\enqu}{\end{equation}}
\newtheorem{lem}{Lemma}
\begin{document}

\title{A radiating spin chain as a model \\ of irreversible dynamics}
\author{P. B\'ona, M. \v Sira\v n}
\thanks{pavel.bona@gmail.com, miso.siran@gmail.com}

\maketitle
\centerline{\Large\it Department of theoretical physics, Comenius University, Bratislava}
\begin{abstract}
\noindent We construct a finite spin-1/2 chain model (quantum domino) interacting with a Fermi field, capable of emitting a scalar fermion from the last spin in the chain. The chain with dynamics gradually reversing the neighbouring spins emits eventually a fermion which escapes then to infinity, and the chain converges to a stationary state. We determine the rate of convergence of the time evolution of the system for $t\rightarrow\infty.$  We prove that the probability of fermion emission as a function of the time $t$ is $1-O(|t|^{-m})$ for arbitrary $m \in \mathbb{N}$. We propose that this fast rate of convergence could serve as an approximate theoretical possibility for the ``effective'' description of the measurement process in the sense proposed by K. Hepp in \cite{Hepp}. This all will be preceded by an outline of explicitly solvable dynamics of infinite version of the spin chain which exhibits transition from a locally perturbed unstable stationary state to a truly macroscopically different - disjoint state, but with slow convergence for $t\rightarrow\infty$.\end{abstract}

\tableofcontents

\section{Motivation and introduction}
 An old and not fully understood problem of physical description of Nature is that of theoretically consistent description of irreversible behavior in the framework of theories invariant with respect to the time inversion. The notorious  ``quantum measurement problem'' (QMP), cf. e.g. \cite{Hepp}, also belongs to this class of problems. It is a known fact that these problems cannot be solved in conceptually and mathematically unambiguous terms in the framework of traditional ``fundamental'' theories of finite systems in finite times, and the QMP is impossible to solve also in the limit of infinite time with the algebra of observables of finite systems, see e.g. \cite{dissert}. It is possible, however, to solve these problems (in a restricted or weakened sense) in the traditional nonrelativistic quantum mechanics (QM) extended to idealized infinitely large systems and infinite times. This was proposed by Klaus Hepp for the description of QMP in \cite{Hepp}. A specific model of this kind was constructed as an infinite spin chain (called the ``quantum domino'' (QD)) \cite{dissert,Bona}, with dynamics described by a one-parameter group of automorphisms (i.e. reversible) with unstable stationary states: After a specific small perturbation of such a state the system evolves (in the limit as time $t\rightarrow\infty$) into another stationary state with changed orientations of infinitely many spins of the chain, hence to a state macroscopically distinguishable from the initial state of this infinite system.\footnote{If that stationary state is perturbed by an arbitrary local perturbation, it evolves in that model also to macroscopically different state which need not be stationary on microscopic scale, cf.\cite{Bona}.} This can be, perhaps, considered as a model for irreversible behavior of the perturbed initial stationary state of the chain, and it was used also for construction of (schematic) models of measurement in QM (cf. e.g. \cite{dissert}, \cite{bon}).

 Our mathematically exactly solvable models are in their physical interpretation ``approximate'' only.
 Let us mention here, however, that during the course of great success of physical theories during the last two hundred years, various approximations (and also some mathematically not controlled ``approximations'') in their formulations nd/or  applications have been accepted as solutions of posed problems. Examples include: thermodynamical limits to infinite systems by dealing with large but finite systems; infinite time limits in description of particle scattering observed experimentally in finite-size  laboratories during short time intervals; conceptually unclear mathematics of relativistic quantum field theory; neglecting environmental influences and/or ascribing (essentially on a base of observer's intuition) some differences between theory and empirical results to such uncontrolled influences, etc. Laboratory experimental confirmation of a theory can be made in some limits of precision only and, moreover, by gradually better isolation of investigated systems from ``external perturbations''. And some {\it ideal precision} test of theoretical results which are expressed e.g. by real numbers cannot be done exactly at all. Approximations are often necessary also from ``purely theoretical'' reasons due to our inability (and sometimes also mathematical impossibility) of precise calculations of mathematically ``existing'' results. For instance, calculated dynamics might be unstable, or we are unable to control the rate of convergence (or the convergence at all) of series of some ``consecutive  approximations'' (perturbation series), etc.

 In any case, the physical theory is made by people, and on behalf of people, who need to use it in their actual variable conditions, and this usually leads to necessity to make approximations. In this connection we also have to keep in mind the words of Niels Bohr concerning QM: {\it There is no quantum world. There is only an abstract quantum physical description. It is wrong to think that the task of physics is to find out how nature is. Physics concerns what we can say about nature.} \cite{Bohr} These considerations could help to justify our motivation and conclusions of this paper.

 A brief exposition of the model of quantum domino will be given in Section \ref{domino}.
 The dynamics of QD is completely explicitly solved not only for infinite chain, but also for arbitrary finite \mbox{(sub-)}chain \cite{Bona}. The language of $C^*-$algebras in the Section  \ref{domino} is used because a macroscopic distinction between states of a system would be difficult to express (in a mathematically correct way) in the language of standard QM.

 In the remaining sections of this paper we shall build a mathematical model of a quantum mechanical system by the use of a finite (sub-\nolinebreak)chain of QD interacting with a (nonrelativistic scalar) Fermi field (which simulates some radiation field) to obtain a model of effectively irreversible behavior  for $t\rightarrow\infty$, with time-inversion symmetric (unitary) dynamics. The $C^*-$algebra language can be easily avoided there.

 Although a ``true irreversibility'' is achieved in our models in weak limits for $t=+\infty$ only, the probability of reversing time evolution of these quantum (sic!) systems after sufficiently long time (by construction of a physically acceptable model of mechanism of such a reversion) after preparing a well determined initial state is extremely  small.
 We could paraphrase Ludwig Boltzmann by: ``Try, please, to reverse empirically this state to obtain eventually the (reversed) initial state!''

\section{The quantum domino}\label{domino}
We shall describe here without details (for which we refer to \cite{Bona}) the model of infinite spin chain which we call quantum domino (QD). The spins are ordered by indices $i\in\mathbb{Z}$ and the Hamiltonian produces a three body interaction for the $i$-th, $(i+1)$-th and $(i+2)$-th spin. This interaction can be described easily as follows: given a state with the $i$-th spin "pointing up" and with the $(i+2)$-th spin "pointing down" the $(i+1)$-th spin changes its orientation to an opposite one. The dynamics of the both sided infinite spin-1/2 quantum chain has spin-configurations ``all spins pointing up'', and ``all spins pointing down'' as stationary states, which are unstable: If we reverse the direction of one of the spins in these states, the state will develop in the limit $t\rightarrow\infty$ into another stationary state, in which all the spins lying on one side of the reversed spin are also reversed, and all the spins lying on the other side of that spin stay unchanged. We shall show in this section how such a dynamics can be obtained.

Let the $C^*-$algebra of observables  $\frak{A}$ be the $C^*-$tensor product of countably infinite copies of algebras\footnote{i.e. the $C^*-$inductive limit of finite products of such algebras} of complex $2\times 2$ matrices generated by the spin creation and annihilation operators $a_j^*, a_j,\ j\in\mathbb{Z}$ satisfying the following (anti)commutation relations
\begin{equation}
[a_i,a_j]=[a^*_i,a_j]=0, \ \  i\neq j \label{anti}
\end{equation}
\begin{equation*}
a_ia_i=0,\ \ a^*_ia_i+a_ia^*_i=1,
\end{equation*}
for all $i, j\in\mathbb{Z}$. The algebra $\mathfrak{A}$ is simple, hence each its nonzero representation is faithful. We shall describe the dynamics in $\frak{A}$ in the ``vacuum'' representation, i.e. in the GNS representation corresponding to the ``vacuum state'' $\omega_0\in \frak{A}^*_{+1}\equiv\mathcal{S}(\frak{A})\ (\equiv$ the ``state space of $\mathfrak{A}$'' consisting of all norm-continuous positive normalized linear functionals on the algebra $\mathfrak{A}$ interpreted physically as expectation functionals on corresponding ``observables'' from $\mathfrak{A}$) that is given by the relation

\begin{equation}\label{vac1}
\omega_0(a_j^*a_j)= 0,\ \ \forall j\in\mathbb Z.
\end{equation}

This state is pure, hence the GNS representation is irreducible. We shall call the spins in this state to be ``pointing down'', to be specific in verbal expression.
Let the cyclic (``vacuum''- in the lattice gas terminology) vector of this representation be denoted by $\Omega_0$, i.e. for all elements $x\in\frak{A}$ it is
\bequ
\omega_0(x)=\langle\Omega_0|x|\Omega_0\rangle,\ \ \forall x\in \mathfrak A.
\enqu

Here, and in the following, we shall denote elements of $\frak A$ and their representatives by operators in the considered irreducible Hilbert space representation by the same symbols. Let us denote this Hilbert space by $\frak F^s$ (according to Fock-spin).

Let us define a ``finite-subchain Hamiltonian''
\begin{equation}\label{HN}
H_{(j,k)}:=\sum_{n=j}^{k-2}a_n^*a_n(a^*_{n+1}+a_{n+1})a_{n+2}a^*_{n+2}.
\end{equation}
Local time evolution automorphisms of $\frak{A}$ are given by
\bequ
\tau^n_t(x):= \exp(itH_{(-n,n)})x\exp(-itH_{(-n,n)}),
\enqu
and the norm limits
\bequ
\tau_t(x):= \lim_{n\rightarrow\infty}\tau^n_t(x)
\enqu
determine the time evolution in $\frak{A}$ (in the ``Heisenberg picture'').

In our vacuum representation this evolution is determined by a selfadjoint Hamiltonian $H$,
\bequ \tau_t(x)= e^{itH}x\ e^{-itH}. \enqu
Here the (unbounded) operator $H$ can be written in the obvious form (its formal definition is given in \cite{Bona})
\begin{equation}\label{HZ}
H:=\sum_{n\in\mathbb Z}a_n^*a_n(a^*_{n+1}+a_{n+1})a_{n+2}a^*_{n+2}.
\end{equation}
This evolution is time-reflexion invariant, but it is not invariant with respect to the space reflexion $n\mapsto -n$.
The operators $g_j:=a_ja_j^*a^*_{j+1}a_{j+1}$ are integrals of motion. One can also prove that the Hilbert space $\frak F^s$ can be decomposed to orthogonal subspaces and on each of them the restriction of the Hamiltonian $H$ is a bounded operator.

Let $X\subset\mathbb N$ be of finite cardinality, and let $\Omega_X:= \prod_{j\in X}a_j^*\ \Omega_0$. The vectors $\Omega_X$ with all finite $X\subset\mathbb N$, with $\Omega_\emptyset:=\Omega_0$ form an orthonormal basis in $\frak F^s$. Each finite $X\subset\mathbb N$ is of the form $Y_1\bigcup Y_2\bigcup\dots\bigcup Y_r$, where all $Y_k\subset\mathbb N$ are finite, mutually disjoint and of the form $\{j_k,j_k+1,j_k+2,\dots,j_k+m_k-1\}$, with $j_{k+1}>j_k+m_k$, i.e the sets $Y_k\subset X$ form mutually separated ``connected islands''. All the vectors $\Omega_X$ are eigenvectors of all the operators $g_j$. For the set $X$ of the just described structure we have
\bequ
g_j\Omega_X=\begin{cases} \Omega_X & \text{for}\ j=j_k-1, \ \ k=1,2,\dots, r \\ 0 & \text{otherwise}.  \end{cases}
\end{equation}

This implies that the time evolution of the vectors $\Omega_X$ conserves the number  of islands,
leaving the initial (``left'') points $j_k$ of each $Y_k, k=1,2,\dots, r$ unchanged (``occupied'', or ``pointing up''), and the places $j_k-1,\ k=2,3,\dots, r$ as well as $j_1-n (n\in\mathbb N)$ remain all the time ``unoccupied'' (i.e. spins are there ``pointing down''). Hence, the subspaces $\frak F^s_{\{{\bf j}\}}$ spanned by all such vectors with fixed $\{{\bf j}\}:= \{j_1,j_2,\dots, j_r\}$ are reduced by the Hamiltonian $H$. Then the space $\frak F^s$ decomposes as

$$\frak F^s = \bigoplus_{\{\bf j\}}\frak F^s_{\{\bf j\}}$$.

 The structure of the Hamiltonian $H$ shows, moreover, that each  $\frak F^s_{\{{\bf j}\}}$ can be written as the tensor product
\bequ
 \frak F^s_{\{{\bf j}\}}=\Omega^0_{(-\infty,j_1)}\otimes \frak F^s_{[j_1,j_2)}\otimes\dots\otimes \frak F^s_{[j_r,+\infty)},
\enqu
where $\Omega^0_{(-\infty,j_1)}$ is one-dimensional space containing the vector with all spins denumbered by $j<j_1$ ``pointing down'', and the spaces $\frak F^s_{[j_k,j_{k+1})}$ are spanned by $j_{k+1}-j_k-1$ vectors corresponding to the ``islands'' $Y_k$ of all permitted lenghts. Here we understand that $j_{r+1}\equiv +\infty$. From the form of the Hamiltonian we see that the time evolution of vectors in $\frak F^s_{\{{\bf j}\}}$ is described by independent evolutions in each $\frak F^s_{[j_k,j_{k+1})}$ by the Hamiltonians $H_{(j_k,j_{k+1})}$, cf. \rref HN~; for more details see \cite{Bona}.

The result of these considerations is that the evolution of general vectors of our representation (hence also the evolution of any states from $\mathcal{S}(\frak{A})$) can be described by two simpler evolutions, namely, the evolutions in finite chains described by vectors $\Omega_Y$ with the islands $Y$ of restricted lenghts, as well as in the Hilbert space spanned by vectors
$\Omega_Y$ with the islands $Y$ of unrestricted lenghts. Because the interaction in our infinite chain is translation invariant, we can describe these two possibilities as\footnote{We shall use the Dirac bra - ket notation in this section for convenience.}

(1) the evolution in the finite-dimensional Hilbert space $\mathfrak H_N$ spanned by the vectors $$|m\rangle := a^*_1a^*_2\dots a^*_m\Omega_0\ (m=1,2,\dots, N)$$ by the unitary evolution $U_N(t):=e^{-itH_N}$ corresponding to the Hamiltonian $H_N:=H_{(1,N+1)}$ from \rref HN~, and

(2) the evolution in the infinite-dimensional Hilbert space space $\mathfrak H_\infty$ spanned by the vectors $$|m\rangle := a^*_1a^*_2\dots a^*_m\Omega_0\ (m=1,2,\dots, \infty)$$ by the unitary evolution $U_\infty(t):=e^{-itH}$ with the Hamiltonian $H:=H_{(1,+\infty)}$.

Let us express these two instances of dynamics by the matrix elements $\langle n|U(t)|m\rangle$ with the corresponding choice of vectors and unitaries. The result can be obtained by explicitly solving the eigenvalue problem for $H_N$ (expressed in terms of the Tshebyshev polynomials of the second kind), and then taking the limit for $N\rightarrow\infty$, cf. \cite{Bona}.

We shall need the following definition:

\bequ\label{JnN}
J^{(N)}_n(\xi):=\frac{i^n}{N+1}\sum_{j=1}^N\exp\left[-i\xi\cos\left(\frac{j\pi}{N+1}\right)\right]\cos\left(n\frac{j\pi}{N+1}\right).
\enqu

This is an integral sum of Sommerfeld integral representation of the Bessel function

\bequ\label{representation}
J_n(\xi)=\frac{i^n}{\pi}\int_0^\pi e^{-i\xi \cos \alpha}\cos(n\alpha) \textrm{d}\alpha.
\enqu

We can now write the desired expression for the Green function of finite chain:

\bequ\label{JN}
\langle n|U_N(t)|m\rangle = (-i)^{n-m} J^{(N)}_{n-m}(2t) - (-i)^{n+m}J^{(N)}_{n+m}(2t),
\enqu

and


\bequ\label{J}
\langle n|U_\infty(t)|m\rangle = (-i)^{n-m} J_{n-m}(2t) - (-i)^{n+m}J_{n+m}(2t).
\enqu
Let us now take the local perturbation $\omega_1(x):=\omega_0(a_1xa^*_1)\ (x\in\mathfrak{A})$ of the time-invariant vacuum state $\omega_0$. The state $\omega_1$ describes the infinite spin-chain in the state, where all the spins except of one lying in the site $j=1$ are pointing down. Its time evolution $\omega_1(\tau_t(x))\equiv \omega^t_1(x)$ can be expressed in terms of the above written results. Let us, for example, calculate the expectation of ``flipping up'' the spin in the $j-$th place at time $t$. We have

\bequ
\omega_1^t(a^*_ja_j)=\sum_{m=1}^\infty \langle 1|e^{itH}a_j^*a_j|m\rangle \langle m|e^{-itH}|1\rangle =
\sum_{m=j}^\infty \langle 1|e^{itH}|m\rangle \langle m|e^{-itH}|1\rangle = 1-\sum_{m=1}^{j-1}|\langle m|e^{-itH}|1\rangle|^2,
\enqu
since
\bequ
a_j^*a_j|m\rangle = \begin{cases} 0 & (m<j),\\ |m\rangle & (m\geq j),     \end{cases}
\enqu
and the set of vectors $\{ |m\rangle : m\in\mathbb N\}$ forms in the relevant Hilbert space an orthonormal basis.
From \rref J~ and from the recurrent formula for Bessel functions
$$ J_{p+1}(\xi)+J_{p-1}(\xi)=\frac{2p}{\xi}J_p(\xi), $$
 we obtain

\bequ
\omega_1^t(a^*_ja_j)=1-\sum_{m=1}^{j-1}\left[\frac{m}{t} J_m(2t)\right]^2.
\enqu
Because of the asymptotic behavior of the Bessel function for large real arguments, given by $J_p(\xi)=O(|\xi|^{-\frac{1}{2}})$, we obtain asymptotic behavior of our expectation:
\bequ\label{converg}
\omega_1^t(a^*_ja_j) \asymp 1-\frac{{\text const.}}{|t^3|}\ \ (\forall j\in\mathbb N), \ {\text for}\ t\rightarrow\infty.
\enqu
Hence the local perturbation of the state ``all spins are pointing down'' converges according to \rref converg~ to the state ``all spins sitting in sites with $j>0$ are pointing up''.

This can be used in construction of models of quantum measurement. For instance, let the infinite chain without the spin in the site $j=0$ serve as an (model of) ``apparatus'' and the spin at $j=0$ serves as a (model of) ``measured microsystem''. If the apparatus is initially in the state $\omega_\downarrow$ with all spins pointing down, and the measured spin is in a superposition $|initial\rangle= c_\downarrow |\downarrow\rangle + c_\uparrow |\uparrow\rangle$ then the final state of the chain (at $t=\infty$) will be (as a state on the algebra $\mathfrak A$ of the composed system ``measured system + apparatus'') in incoherent genuine mixture $\omega_f$ according to the above described  dynamics: $\omega_f=|c_\downarrow|^2 \omega_0 + |c_\uparrow|^2 \omega_\uparrow$, where the state $\omega_\uparrow$ means that all spins of the composed system lying in sites $j\geq 0$ are pointing up. The states $\omega_0$ and $\omega_\uparrow$ on $\mathfrak A$ are mutually disjoint, what is interpreted here as ``macroscopic difference''. Also, the states $\omega_0$ and $\omega_\uparrow$ define two representations of the algebra of quasi-local observables (see \cite{Bra}  for details) which are not unitary equivalent, and can be distinguished by a measurement of a macroscopic observable. This is an example in the spirit of the models proposed in the classical paper by Hepp \cite{Hepp}.

\section{Model of finite radiating chain}\label{model}
Let us consider now a composite model consisting of a finite spin-1/2 chain of the lenght $N+1$ (indexed by $i=0,\ldots,N$, note we use the expression $i-th$ spin in the sense of this labeling, so the chain begins with the zeroth spin) with the dynamics of QD (as it was  described in the section \ref{domino}), of a two-level unstable ``particle'' modeled by the $N$-th spin-1/2 at the end of the chain, and of a scalar Fermi field interacting with the ``particle''.

The dynamics of the model can be described as follows. The initial (stationary) state consists of the spin chain with all spins pointing down, and the Fermi field in the vacuum state. Flipping of the first spin (either ``by hand'', or by an external influence, e.g. by a scattering with another system) leads to the ``domino effect'' described above, which subsequently flips all the next spins of the finite chain. The flipping of the final spin (interacting with the Fermi field) results in the emission of a fermion by the chain. Since the spin chain interacts with the Fermi field by short range interaction only, the fermion with sufficiently large kinetic energy escapes irreversibly into infinity and the chain remains in the state with all the $N+1$ spins ``pointing up''. This model was proposed by the first author and mentioned in \cite[Section 2.2]{Bon}.\\

Such a model could be considered also as an example of the often debated process of ``decoherence'', where the Fermi field and also a part of the long spin chain can play the role of the ``environment''; it can be used then as a corresponding model of quantum measurement (the ``measured'' object would be then the initial spin of the chain, and the ``measuring apparatus'' is the rest of the finite spin chain). It still allows, however, interference of approximately ``macroscopically different states'', which, with the growing size $N$ of the ``apparatus'', becomes apparently less probable, cf. \cite{leggett}. \footnote{Some process of this ``decoherence type'' seems to be only possibility of description of QMP in framework of conventional physics without modifying QM, cf. \cite{Hepp,leggett}.} We shall return briefly to this aspect of our model in Section \ref{conclusion}. \\

\noindent The Hilbert space of states is $\mathcal{H}:=(\mathbb{C}^2)^{N+1}\otimes \mathfrak{F}$, where $\mathfrak{F}$ is the Fermi Fock space with the vacuum vector $\Omega_0^f$ - the representation space of the $CAR$\ $C^*$-algebra $\mathcal{A}^f$ . For the spin space $(\mathbb{C}^{2})^{N+1}$ we define $\Omega_0^s$ as the ``vacuum'' vector (it is the state of all spins pointing down).

The spin-1/2 creation and annihilation operators $a^*_j,\ a_j\ (j=0,1,\dots, N)$ satisfying \rref anti~ for $i,j=0,1,\ldots, N$ acting on the space $(\mathbb{C}^2)^{N+1}$ generate the (finite dimensional) algebra of spin observables $\mathcal{A}^s$. The vacuum state $\omega_0 \in \mathcal{S}(\mathcal{A}^s)\subset {\mathcal{A}^s}^*$ ($\mathcal{S}(\mathcal{A}^s)$ denotes the set of states on $\mathcal{A}^s$), $\omega_0(a^*_ia_i)=0\ (\forall i)$ defines the cyclic vacuum vector $\Omega_0^s$, such that $\omega_0(a)=(\Omega_0^s,\ a\ \Omega_0^s)$, $a \in \mathcal{A}^s$. The self-adjoint Hamiltonian $H$ on $\mathcal{H}$ is defined as a sum $H:=H_0+V$ (note that the spin chain emits a fermion, so that all creation and annihilation operators are bounded, see \cite[Proposition 5.2.2.]{Bra}, ). The first operator is given by the following expression
\begin{equation}
H_0:=\left(\sum_{n=0}^{N-2}a^*_na_n(a^*_{n+1}+a_{n+1})a_{n+2}a^*_{n+2}-\varepsilon_0a^*_Na_N\right) \otimes I^f+I^s\otimes d\Gamma(h), \label{H0}
\end{equation}
where $\varepsilon_0>0$, and $h$ is a self-adjoint operator on $L^2(\mathbb{R}^3,d^3x)$ that operates under the Fourier transform $\mathcal{F}$ as (for arbitrary $\phi \in L^2(\mathbb{R}^3,d^3x)$)
\begin{equation}\label{ft}
\mathcal{F}[h\phi](\vec{p}):=\varepsilon(\vec{p})\mathcal{F}[\phi](\vec{p}),
\end{equation}
where
\begin{equation*}
\mathcal{F}[\phi](\vec{p}):=\frac{1}{(2\pi)^{3/2}}\int_{\mathbb{R}^3}e^{-i\vec{p}.\vec{x}}\phi(\vec{x})d^3x,
\end{equation*}
with $\varepsilon:\mathbb{R}^3 \to \mathbb{R}_+$ is to be specified later, and $d\Gamma(h)$ is the second quantization of $h$ (see \cite[Sec.5.2.1.]{Bra}). The interaction part of the Hamiltonian is
\begin{equation}
V:=v^2\left(a^*_{N-1}a_{N-1}a^*_N\otimes b^*(\sigma)+a^*_{N-1}a_{N-1}a_N\otimes b(\sigma)\right); \label{V}
\end{equation}
here $v\in \mathbb{R}$ and $\sigma \in L^2(\mathbb{R}^3,d^3x)$ are to be specified later, the (anti-)linear mappings $b, b^*: L^2(\mathbb{R}^3,d^3x) \to \mathcal{L}(\mathfrak{F})$ give the annihilation, resp. creation operators of the scalar fermion satisfying the relations:
\[ \begin{array}{ll}\label{antiF}
  b(\varphi)^2=0, \
  b(\varphi)b^*(\psi)+b^*(\psi)b(\varphi)=(\varphi,\psi)I^f, &\  (\forall \varphi,\psi\in L^2(\mathbb{R}^3,d^3x)).
  \end{array}\]
 We denote by $H^{(N)}$ the part of $H_0$ for the spin chain
\begin{equation*}
H^{(N)}:=\sum_{n=0}^{N-2}a_n^*a_n(a^*_{n+1}+a_{n+1})a_{n+2}a^*_{n+2}\otimes I^f.
\end{equation*}

Define the closed subspace $\mathcal{H}_1\subset\mathcal{H}$ by the following closure
\begin{equation}
\mathcal{H}_1:=\overline{\text{span}\left\{\Omega_0,\beta_n,\beta_N(\phi); n=0,\ldots,N-1,\phi\in L^2(\mathbb{R}^3,d^3x)\right\}}, \label{H1}
\end{equation}
\begin{equation*}
\Omega_0:=\Omega_0^s \otimes \Omega_0^f,
\end{equation*}
\begin{equation*}
\beta_n:=a_0^*\ldots a_n^*\Omega_0^s\otimes \Omega_0^f,
\end{equation*}
\begin{equation*}
\beta_N(\phi):=a_0^*\ldots a_N^*\Omega_0^s\otimes b^*(\phi)\Omega_0^f.
\end{equation*}
Our examination of dynamics of this model will be restricted to the subspace $\mathcal{H}_1$ mainly. This will be possible because of the following Lemma.
\begin{lem}
 The subspace $\mathcal{H}_1$ defined by $(\ref{H1})$ is $H$-invariant, that is $H\mathcal{H}_1\subset \mathcal{H}_1$.\hfill $\triangle$ \end{lem}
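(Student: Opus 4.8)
The plan is to prove the invariance by a direct computation on the generators of $\mathcal{H}_1$. Write $H=H_0+V$ as in \eqref{H0}, \eqref{V}, and record first three elementary facts that make the computation mechanical: spin operators $a_i,a_i^*$ at distinct sites commute (relation \eqref{anti}), so no Jordan--Wigner signs appear; the spin sector and the Fermi sector commute, by the tensor-product structure of $\mathcal{H}$; and $a_ia_i^*$ is the orthogonal projection onto ``spin $i$ down'' while $a_i^*a_i$ is the projection onto ``spin $i$ up'', with $b(\sigma)b^*(\phi)\Omega_0^f=(\sigma,\phi)\,\Omega_0^f$. I would also note at the outset that $H$ differs from the (possibly unbounded) operator $I^s\otimes d\Gamma(h)$ only by a bounded operator --- the finite QD sum, the term $-\varepsilon_0\,a_N^*a_N$, and $V$ (recall $\|b^*(\sigma)\|=\|\sigma\|$) are all bounded on $\mathcal{H}$ --- so $D(H)=(\mathbb{C}^2)^{N+1}\otimes D(d\Gamma(h))$; since moreover the range of $\phi\mapsto\beta_N(\phi)$ is already closed (the map is an isometry of $L^2$ into $\mathcal{H}$), $\mathcal{H}_1\cap D(H)$ is precisely the set of vectors $c\,\Omega_0+\sum_{n=0}^{N-1}c_n\beta_n+\beta_N(\phi)$ with $c,c_n\in\mathbb{C}$ and $\phi\in D(h)$. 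It therefore suffices to check that $H$ sends $\Omega_0$, each $\beta_n$, and each $\beta_N(\phi)$ with $\phi\in D(h)$ into $\mathcal{H}_1$; the claim then follows by linearity.

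For $\Omega_0$ (all spins down, Fermi vacuum): every summand of the QD part carries the factor $a_m^*a_m$ which annihilates the all-down state, $a_N^*a_N\Omega_0^s=0$, $d\Gamma(h)\Omega_0^f=0$, and $V$ carries the factor $a_{N-1}^*a_{N-1}$ which annihilates $\Omega_0^s$; hence $H\Omega_0=0$. For $\beta_n$ (spins $0,\dots,n$ up, the rest down, Fermi vacuum): in a QD summand $a_m^*a_m(a_{m+1}^*+a_{m+1})a_{m+2}a_{m+2}^*$ the factor $a_m^*a_m$ forces spin $m$ up (so $m\le n$) and $a_{m+2}a_{m+2}^*$ forces spin $m+2$ down (so $m\ge n-1$); thus only $m\in\{n-1,n\}\cap\{0,\dots,N-2\}$ contribute --- $m=n$ gives $\beta_{n+1}$ (the middle factor acts through $a_{n+1}^*$ on the down spin $n{+}1$), $m=n-1$ gives $\beta_{n-1}$ (through $a_n$ on the up spin $n$) --- and the potentially problematic ``$\beta_N$ with no fermion'' never arises, precisely because the sum stops at $m=N-2$. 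Moreover $a_N^*a_N\beta_n=0$ and $d\Gamma(h)$ kills the Fermi vacuum, while the factor $a_{N-1}^*a_{N-1}$ in $V$ forces $n=N-1$, where the creation part of $V$ produces $v^2\beta_N(\sigma)$ and the annihilation part vanishes on the down spin $N$. Hence $H\beta_n\in\mathcal{H}_1$.

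For $\beta_N(\phi)$ (all $N{+}1$ spins up, one fermion in $\phi$): every QD summand vanishes because $a_{m+2}a_{m+2}^*=0$ on the up spin $m{+}2$; the term $-\varepsilon_0\,a_N^*a_N$ gives $-\varepsilon_0\,\beta_N(\phi)$; $I^s\otimes d\Gamma(h)$ gives $\beta_N(h\phi)$ (here $\phi\in D(h)$ is used, and $h\phi\in L^2$); in $V$ the creation part vanishes on the up spin $N$, and the annihilation part flips spin $N$ down and contracts the fermion, giving $v^2(\sigma,\phi)\,\beta_{N-1}$. Thus $H\beta_N(\phi)=-\varepsilon_0\,\beta_N(\phi)+\beta_N(h\phi)+v^2(\sigma,\phi)\,\beta_{N-1}\in\mathcal{H}_1$. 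Together with the description of $\mathcal{H}_1\cap D(H)$ from the first paragraph, this gives $H(\mathcal{H}_1\cap D(H))\subset\mathcal{H}_1$, i.e.\ $\mathcal{H}_1$ is $H$-invariant (in fact reducing).

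I do not anticipate a serious obstacle: the only work is the index bookkeeping for the QD part, and the single point that warrants a line of care --- that an unbounded $H$ should not be allowed to ``move'' vectors out of its domain --- is handled by the observation that $H-I^s\otimes d\Gamma(h)$ is bounded, which fixes $D(H)$ and legitimizes the term-by-term computation. The conceptual content, worth stating explicitly, is that the QD interaction connects only islands of neighbouring lengths, while the field coupling $V$ links only the maximal island $\{0,\dots,N\}$ to the one-fermion sector and back; so the family $\{\Omega_0,\beta_n,\beta_N(\phi)\}$ is already closed under the action of $H$.
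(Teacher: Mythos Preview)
Your proof is correct and follows essentially the same approach as the paper's: a direct verification that $H$ maps each generator $\Omega_0$, $\beta_n$, $\beta_N(\phi)$ back into $\mathcal{H}_1$, with the same case analysis and the same key identities ($V\beta_n=0$ for $n\le N-2$, $V\beta_{N-1}=v^2\beta_N(\sigma)$, $V\beta_N(\phi)=v^2(\sigma,\phi)\beta_{N-1}$, etc.). The one respect in which you go beyond the paper is your explicit handling of the domain: you note that $H$ differs from $I^s\otimes d\Gamma(h)$ by a bounded operator, identify $\mathcal{H}_1\cap D(H)$, and restrict $\phi$ to $D(h)$ so that $\beta_N(h\phi)$ makes sense---the paper simply computes formally without addressing this point.
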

 \begin{proof}
For $n=0,\ldots, N-1$,
\begin{equation*}
H_0\beta_n=\underbrace{\left(\sum_{k=0}^{N-2}a_k^*a_k(a^*_{k+1}+a_{k+1})a_{k+2}a^*_{k+2}-\varepsilon_0a^*_Na_N\right)a_0^*\ldots a_n^*\Omega_0^s\otimes\Omega^f}_{\in\mathcal{H}_1}+a_0^*\ldots a_n^*\Omega_0^s\otimes \underbrace{d\Gamma(h)\Omega_0^f}_{=0},
\end{equation*}
and for $n=0, \ldots, N-2$, we have
\begin{equation*}
V\beta_n=v^2\left(a^*_{N-1}a_{N-1}a_N^*a_0^*\ldots a_n^*\Omega_0^s\otimes b^*(\sigma)\Omega_0^f+a_{N-1}^*a_{N-1}a_Na_0^*\ldots a^*_n\Omega_0^s\otimes \underbrace{b(\sigma)\Omega_0^f}_{=0}\right)=
\end{equation*}
\begin{equation*}
=v^2a^*_{N-1}a^*_Na^*_0\ldots a^*_n\underbrace{a_{N-1}\Omega_0^s}_{=0}\otimes b^*(\sigma)\Omega_0^f=0.
\end{equation*}
For the remaining cases
\begin{equation*}
V\beta_{N-1}=v^2a^*_{N-1}a_{N-1}a^*_{N}a_0^*\ldots a_{N-1}^*\Omega_0^s\otimes b^*(\sigma)\Omega_0^f=v^2\beta_N(\sigma) \in \mathcal{H}_1,
\end{equation*}
\begin{equation*}
\left(I^s\otimes d\Gamma(h)\right)\beta_N(\phi)=a_0^*\ldots a_N^*\Omega_0^s\otimes d\Gamma(h)b^*(\phi)\Omega_0^f=\beta_N(h\phi)\in \mathcal{H}_1,
\end{equation*}
\begin{equation*}
(H^{(N)}-\varepsilon_0a^*_Na_N\otimes I^f)\beta_N(\sigma)=-\varepsilon_0\beta_N(\sigma) \in \mathcal{H}_1,
\end{equation*}
\begin{equation*}
V\beta_N(\phi)=v^2\left(a^*_{N-1}a_{N-1}a^*_Na^*_0\ldots a^*_N\Omega_0^s\otimes b^*(\sigma)b^*(\phi)\Omega_0^f+a^*_{N-1}a_{N-1}a_Na^*_0\ldots a^*_N\Omega_0^s\otimes b(\sigma)b^*(\phi)\Omega_0^f\right)=
\end{equation*}
\begin{equation*}
=v^2 a^*_0\ldots a^*_{N-1}\Omega_0^s\otimes b(\sigma)b^*(\phi)\Omega_0^f=v^2(\sigma,\phi)_{L^2(\mathbb{R}^3,d^3x)}\beta_{N-1}\in \mathcal{H}_1.
\end{equation*}
\newline
In the last equation $(\cdot,\cdot)_{L^2(\mathbb{R}^3,d^3x)}$ denotes the standard scalar product in $L^2(\mathbb{R}^3,d^3x)$.
\end{proof}
Hence the subspace $\mathcal{H}_1$ is $H$-invariant, and we can restrict our examination of the dynamics to this subspace.

\section{Dynamics of the model}\label{dynamics}

The dynamics of the model is chosen such that the last $N$-th spin in the chain can emit or absorb a fermion in the state described by the vector $\sigma \in L^2(\mathbb{R}^3,d^3x)$, where emission is connected with ``switching up'' and absorbtion with ``switching down'' of the $N$-th spin. This process is possible if the $(N-1)$-st spin is ``pointing up'' only. This means that if the initial state is described by a vector in $\mathcal{H}_1$, then the probability of the emission of a fermion by the chain (being in $\mathcal{H}_1$) equals to the probability of the $N$-th spin to be ``pointing up''.

We intend to prove that the probability of the emission of a fermion by the chain starting in the initial state of the system corresponding to the vector $\beta_n,\ (n=0,1,\dots, N-1)$ evolves with time $t\to\infty$ quickly to 1, the speed of this convergence being ``almost exponential''. More precisely, we shall show that for specific ``conveniently chosen'' parameters of the model, the probability of the state with ``all the spins pointing up'' is

\bequ\label{evolution}
(\beta_n,e^{itH}a^*_Na_Ne^{-itH}\beta_n) = 1-o(t^{-m}),
\enqu
for all $m\in\mathbb N$ and $t\to+\infty.$ \\

\noindent It will be convenient for analysis of matrix elements of the unitary group $\exp(-itH)$ of time evolution to analyze instead their Fourier transform. Because we do not know a priori nothing about integrability (and the existence of classical Fourier transform) of the functions  $t\to (\psi,\exp(-itH)\varphi)$, we shall consider them as tempered distributions  defined by locally integrable functions. The classical Fourier transform can be extended to such distributions. Effective theorems  are valid for Fourier transforms of distributions defined on half-line (or, in more dimensions, on a cone). The desired Fourier transforms of such distributions in real domain are boundary values on $\mathbb{R}$ of functions complex analytic on an open half-plane, cf. e.g. \cite[Theorem IX.16]{RS2}. We shall obtain in this way, after a choice of model parameters, Fourier transforms of functions of the desired behavior.

Before we begin our analysis of the model, let us prove some general propositions which we will apply later.
 Denote by $R_A$ the resolvent of a closed operator $A$ on the domain $D(A)\subset \mathcal{H} $, that is, $R_A(\xi)=(A-\xi I)^{-1}$, with $\xi \notin \sigma(A)$, where $\sigma(A)$ denotes the spectrum of $A$. We will use the Heaviside function $\theta$ defined as
\begin{equation*}
\theta(t):= \left\{
  \begin{array}{l l}
    0 & t<0,\\
    1 & t\geq 0.\\
  \end{array} \right.
\end{equation*}

\begin{lem}
\label{fourier}
Let $e^{-itH}$ be any (unitary) time evolution group. Then the Fourier transform of its (truncated) matrix elements for given $\phi, \psi \in \mathcal{H}$ is
\begin{equation*}
\mathcal{F}[\theta(t)(\phi,e^{itH}\psi)](\xi)=\frac{i}{\sqrt{2\pi}}(\phi,R_{H}(\xi)\psi),
\end{equation*}
for all  $\xi\notin \sigma(H)$, $\text{Im}\ \xi < 0$.\hfill $\triangle$
\end{lem}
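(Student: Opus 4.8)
The plan is to observe that for $\operatorname{Im}\xi<0$ the object on the left is not genuinely distributional but an ordinary, absolutely convergent integral, and then to evaluate it via the spectral theorem. Since $e^{-itH}$ is a strongly continuous unitary group, Stone's theorem makes $H$ self-adjoint, so $\sigma(H)\subset\mathbb{R}$ and the hypothesis $\operatorname{Im}\xi<0$ already forces $\xi\notin\sigma(H)$ (the clause $\xi\notin\sigma(H)$ in the statement is thus redundant). From $\|e^{itH}\psi\|=\|\psi\|$ one gets
\begin{equation*}
\bigl|e^{-i\xi t}\,\theta(t)\,(\phi,e^{itH}\psi)\bigr|\le\|\phi\|\,\|\psi\|\,e^{t\operatorname{Im}\xi}\,\theta(t)\in L^1(\mathbb{R}),
\end{equation*}
so, with the normalization $\mathcal{F}[f](\xi)=\tfrac{1}{\sqrt{2\pi}}\int_{\mathbb{R}}e^{-i\xi t}f(t)\,dt$ (the one-dimensional analogue of the convention fixed in Section~\ref{model}),
\begin{equation*}
\mathcal{F}[\theta(t)(\phi,e^{itH}\psi)](\xi)=\frac{1}{\sqrt{2\pi}}\int_0^{\infty}e^{-i\xi t}(\phi,e^{itH}\psi)\,dt .
\end{equation*}

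Next I would insert the spectral resolution $H=\int_{\mathbb{R}}\lambda\,dE(\lambda)$ and write $(\phi,e^{itH}\psi)=\int_{\mathbb{R}}e^{it\lambda}\,d\mu(\lambda)$, where $\mu(\cdot):=(\phi,E(\cdot)\psi)$ is a complex Borel measure of total variation $\|\mu\|\le\|\phi\|\,\|\psi\|$. Since
\begin{equation*}
\int_0^{\infty}\!\!\int_{\mathbb{R}}\bigl|e^{-i\xi t}e^{it\lambda}\bigr|\,d|\mu|(\lambda)\,dt=\|\mu\|\int_0^{\infty}e^{t\operatorname{Im}\xi}\,dt<\infty ,
\end{equation*}
Fubini's theorem permits interchanging the two integrations, and the inner one is the elementary
\begin{equation*}
\int_0^{\infty}e^{i(\lambda-\xi)t}\,dt=\frac{i}{\lambda-\xi},
\end{equation*}
valid precisely because $\operatorname{Im}(\lambda-\xi)=-\operatorname{Im}\xi>0$ gives decay at $+\infty$. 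Reassembling and using the functional-calculus identity $\int_{\mathbb{R}}(\lambda-\xi)^{-1}\,dE(\lambda)=(H-\xi I)^{-1}=R_H(\xi)$,
\begin{equation*}
\mathcal{F}[\theta(t)(\phi,e^{itH}\psi)](\xi)=\frac{i}{\sqrt{2\pi}}\int_{\mathbb{R}}\frac{d\mu(\lambda)}{\lambda-\xi}=\frac{i}{\sqrt{2\pi}}\bigl(\phi,R_H(\xi)\psi\bigr),
\end{equation*}
which is the assertion.

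An equivalent, operator-level route is to note that the strongly convergent integral $G:=\int_0^{\infty}e^{it(H-\xi I)}\psi\,dt$ (convergent because $\|e^{it(H-\xi I)}\psi\|=e^{t\operatorname{Im}\xi}\|\psi\|$) satisfies $i(H-\xi I)G=-\psi$ — the integrand being $\tfrac{d}{dt}e^{it(H-\xi I)}\psi$ and the boundary term at $+\infty$ vanishing — whence $G=iR_H(\xi)\psi$; this is just the familiar representation of the resolvent as a Laplace transform of the unitary group. I do not anticipate any substantive obstacle: the only care-points are matching the sign and normalization of $\mathcal{F}$ with those adopted in the paper and the (routine) justification of the interchange of integrations, both handled by the exponential decay supplied by $\operatorname{Im}\xi<0$. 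The genuine subtlety — passing to the distributional boundary value as $\operatorname{Im}\xi\downarrow0$, so as to recover the Fourier transform of the non-integrable function $\theta(t)(\phi,e^{itH}\psi)$ — is not part of this Lemma and belongs to the subsequent analysis.
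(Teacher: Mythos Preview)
Your proof is correct and follows essentially the same route as the paper's: write the truncated Fourier transform as a $t$-integral, insert the spectral resolution of $H$, swap the order of integration, and identify the resulting $\int(\lambda-\xi)^{-1}\,d\mu(\lambda)$ with the resolvent matrix element. You are in fact more careful than the paper in two respects---you explicitly justify the interchange via Fubini with the bound $\|\mu\|\int_0^\infty e^{t\operatorname{Im}\xi}\,dt<\infty$, and you note that $\xi\notin\sigma(H)$ is automatic from $\operatorname{Im}\xi<0$---and your alternative operator-level argument (the Laplace-transform representation of the resolvent) is a nice bonus not present in the paper.
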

\begin{proof}
We shall calculate the Fourier transform defined as
\begin{equation*}
\mathcal{F}[\theta(t)(\phi,e^{itH}\psi)](\xi)=\frac{1}{\sqrt{2\pi}}\int_{\mathbb{R}_+}dt e^{-i\xi t} (\phi,e^{itH}\psi).
\end{equation*}
The spectral theorem for the hamiltonian (see \cite[Theorem VIII.6]{RS}),
\begin{equation*}
H=\int_{\sigma(H)}\lambda E_H(d\lambda)
\end{equation*}
leads by the functional calculus to
\begin{equation*}
f(H):=\int_{\sigma(H)} f(\lambda) E_H(d\lambda),
\end{equation*}
where $E_H: \mathcal{B} \to \mathcal{L}(\mathcal{H})$ is the unique projector valued measure from the aforementioned von Neumann theorem ($\mathcal{B}$ denotes the Borel $\sigma$-algebra of subsets of $\mathbb{R}$), and $f$  is continuous on the spectrum, $f \in C(\sigma(H))$. If we choose $f(\lambda):=e^{it\lambda}$, we can express the Fourier transform of the (truncated) matrix elements $\theta(t)(\phi,e^{itH}\psi)$ in the form:
\begin{equation}
\frac{1}{\sqrt{2\pi}}\int_{\mathbb{R}_+}dt e^{-it\xi} (\phi,e^{itH}\psi)=\frac{1}{\sqrt{2\pi}}\int_{\mathbb{R}_+}dt e^{-it\xi}\int_{\sigma(H)}e^{it\lambda}(\phi,E_H(d\lambda)\psi). \label{pi}
\end{equation}
Integrating with respect to $t$ in the case Im $\xi < 0$ we obtain
\begin{equation*}
\mathcal{F}[\theta(t)(\phi,e^{itH}\psi)](\xi)=
\frac{1}{\sqrt{2\pi}}\int_{\sigma(H)}\left.\frac{e^{it(\lambda-\xi)}}{i(\lambda-\xi)}\right|^\infty_0(\phi,E_H(d\lambda)\psi),
\end{equation*}
and because
\begin{equation*}
\int_{\sigma(H)}\frac{1}{\lambda-\xi}(\phi,E_H(d\lambda)\psi)= (\phi,R_{H}(\xi)\psi),
\end{equation*}
  we have the result.
\end{proof}
\noindent Another useful result is that we obtain the resolvent $R_H(\lambda)$ as a solution of an operator equation. \begin{lem}
Suppose $H=H_0+V\in \mathcal{L}(\mathcal{H})$ and $\xi \notin \sigma(H)\cup\sigma(H_0)$. Then the resolvent $R_H(\xi)$ is the solution of the operator equation \label{res}
\begin{equation}
R_H(\xi)=R_{H_0}(\xi)(I-VR_{H}(\xi)).\label{3.2}
\end{equation}
Hence, the Fourier transform of the (truncated) matrix elements of the time evolution operator for Im $\xi < 0$ is given by:
\begin{equation*}
\mathcal{F}[\theta(t)(\phi,e^{itH}\psi)](\xi)=\frac{i}{\sqrt{2\pi}}(\phi,R_{H_0}(\xi)\psi)-\frac{i}{\sqrt{2\pi}}(\phi,R_{H_0}(\xi)VR_{H}(\xi)\psi).
\end{equation*}\hfill $\triangle$
\end{lem}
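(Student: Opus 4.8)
\emph{Proof proposal.} The plan is to deduce the operator equation from the trivial algebraic relation $H_{0}-H=-V$ and then to substitute the result into Lemma~\ref{fourier}. Throughout, recall that in this model $H_{0}$, $V$ and $H=H_{0}+V$ are bounded operators on $\mathcal{H}$ (this is the content of the remark preceding \eqref{H0}, and $H\in\mathcal{L}(\mathcal{H})$ is part of the hypothesis), so all the products of operators below are bona fide elements of $\mathcal{L}(\mathcal{H})$ and there are no domain subtleties. Since $\xi\notin\sigma(H)\cup\sigma(H_{0})$, the bounded inverses $R_{H}(\xi)=(H-\xi I)^{-1}$ and $R_{H_{0}}(\xi)=(H_{0}-\xi I)^{-1}$ are defined on all of $\mathcal{H}$.

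First I would write $(H_{0}-\xi I)=(H-\xi I)-V$, which is immediate from $H=H_{0}+V$. Multiplying this equality on the left by $R_{H_{0}}(\xi)$ and on the right by $R_{H}(\xi)$ and using $R_{H_{0}}(\xi)(H_{0}-\xi I)=I$ and $(H-\xi I)R_{H}(\xi)=I$, the left-hand side becomes $R_{H}(\xi)$ and the right-hand side becomes $R_{H_{0}}(\xi)-R_{H_{0}}(\xi)VR_{H}(\xi)$; comparing the two gives $R_{H}(\xi)=R_{H_{0}}(\xi)\bigl(I-VR_{H}(\xi)\bigr)$, which is \eqref{3.2}. (If moreover $\xi$ is such that $I+R_{H_{0}}(\xi)V$ is invertible --- e.g. if $\|R_{H_{0}}(\xi)V\|<1$ --- then rewriting \eqref{3.2} as $\bigl(I+R_{H_{0}}(\xi)V\bigr)R_{H}(\xi)=R_{H_{0}}(\xi)$ shows that $R_{H}(\xi)$ is the \emph{unique} solution, namely $\bigl(I+R_{H_{0}}(\xi)V\bigr)^{-1}R_{H_{0}}(\xi)$; this refinement is not needed in what follows.)

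For the last assertion I would substitute \eqref{3.2} in the expanded form $R_{H}(\xi)=R_{H_{0}}(\xi)-R_{H_{0}}(\xi)VR_{H}(\xi)$ into the identity of Lemma~\ref{fourier}, then take the inner product with $\phi$ on the left, apply to $\psi$ on the right, and use linearity; this yields precisely the two displayed terms. I do not expect a genuine obstacle here --- the argument is just the second resolvent identity followed by one substitution. The only things worth a word of care are that the manipulations of inverses are justified exactly because all the operators are bounded and $\xi$ avoids both spectra, and that the resulting formula inherits from Lemma~\ref{fourier} the standing restriction $\text{Im}\,\xi<0$ (which, since $H$ and $H_{0}$ are self-adjoint, already guarantees $\xi\notin\sigma(H)\cup\sigma(H_{0})$).
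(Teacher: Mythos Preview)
Your proof is correct and follows essentially the same route as the paper: both arguments amount to the second resolvent identity, with the paper verifying \eqref{3.2} by expanding the right-hand side as $R_{H_0}(\xi)\bigl((H-\xi)R_H(\xi)-VR_H(\xi)\bigr)=R_{H_0}(\xi)(H_0-\xi)R_H(\xi)=R_H(\xi)$, while you instead sandwich the relation $(H_0-\xi I)=(H-\xi I)-V$ between the two resolvents. The parenthetical remark on uniqueness is correct but, as you note, not needed here.
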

\begin{proof} The right-hand side of equation ($\ref{3.2}$) can be expressed as
\begin{equation*}
R_{H_0}(\xi)\left((H-\xi)R_H(\xi)-VR_H(\xi)\right)=R_{H_0}(\xi)(H-\xi-V)R_H(\xi)=
\end{equation*}
\begin{equation*}
=R_{H_0}(\xi)(H_0-\xi)R_H(\xi)=R_H(\xi),
\end{equation*}
which gives ($\ref{3.2}$). The rest of this Lemma is valid due to Lemma $\ref{fourier}$.
\end{proof}

\noindent After these general considerations we shall proceed to examine the dynamics of our model.
\noindent Now, the goal is to set $\sigma \in L^2(\mathbb{R}^3,d^3x)$, $\varepsilon_0, v$ and $\varepsilon(\vec{p})$ in the model $(\ref{H0}), (\ref{ft}), (\ref{V})$ such that for $n=0,\ldots, N-1$ we have
\begin{equation}
\lim_{t \to \infty}(\beta_n,e^{itH}a^*_Na_Ne^{-itH}\beta_n)=1, \label{limit}
\end{equation}
where $\beta_n:=a^*_0\ldots a^*_n\Omega_s \otimes \Omega_f$ are the states of the first $n+1$ spins pointing up.  The matrix element in \rref limit~ measures probability of emission of the fermion in time $t$ if the system was in the state $\beta_n$ at $t=0.$   We want to calibrate the model in such a way that \rref limit~ will be satisfied. In this case, equation(s) ($\ref{limit}$) states that in the limit of infinite time, starting in $t=0$ from any of the states $\beta_n\in \mathcal{H}_1$,  the fermion escapes into infinity, and all the spins of the chain remain pointing up ($\beta_N$ is a stationary state). \\

\noindent Central to our analysis is the matrix element
\bequ\label{F-mn}
F_{mn}:=(\beta_m,R_H(\xi),\beta_n),
\enqu
since by Lemma $\ref{fourier}$  we obtain
\begin{equation}
\mathcal{F}[\theta(t)(\beta_m,e^{itH}\beta_n)](\xi)=\frac{i}{\sqrt{2\pi}}F_{mn}(\xi).
\end{equation}
\noindent
We will show that this matrix element can be expressed with the use of the following matrix elements:
\begin{subequations}
\bequ\label{f-mn}
f_{mn}(\xi):=(\beta_m,R_{H_0}(\xi)\beta_n),\ m,n=0,\ldots, N-1,
\enqu
and
\bequ\label{f-sig}
f^\sigma_{N N}(\xi):=(\beta_N(\sigma),R_{H_0}(\xi)\beta_N(\sigma)).
\enqu
\end{subequations}
 This will prove useful in the analysis of the singularities of $F_{mn}$ as the elements $f_{mn}$ can be expressed using the integral sums of the Sommerfeld representation of Bessel functions (\ref{JnN}). Note that the capital letter $F$ is used for matrix elements of the resolvent of $H$ and the lowercase letter $f$ is used for matrix elements of the resolvent of $H_0$. Also note the fact that for $n=0,\ldots,N-1$\ it is $R_{H_0}(\xi)\beta_n=R_{H^{(N)}}(\xi)\beta_n$ so that we can exchange $H^{(N)}$ for $H_0$ in the definitions of $f_{mn}$.\\

\noindent Exploiting the identity operator trick we have for the space $\mathcal{H}_1$
\begin{equation}\label{sum}
VR_H(\xi)\psi=\sum_{\substack{
\phi\in\{\Omega_0,\beta_n,\beta_N(\phi_j); \\
n=0,\ldots,N-1,j=0,1,\dots\}}}
(\phi,R_H(\xi)\psi)V\phi \ ;
\end{equation}
 by setting $\phi_j$ such that $\phi_0:=\sigma$ with $(\sigma,\sigma)_{L^2(\mathbb{R}^3,d^3x)}=1$ and $\{\phi_j; j=0,1,...\}$ forms an orthonormal basis of $L^2(\mathbb{R}^3,d^3x)$, then $V\beta_{N-1}=v^2\beta_N(\sigma)$ and $V\beta_N(\phi_0)=v^2\beta_{N-1}$ are the only non-zero $V\phi$ terms. Recalling that $\text{span}\ \{\beta_n, n=0, \ldots, N-1\}$ is $H_0$-invariant and that it is orthogonal to the states with all spins up, that is $(\beta_n,\beta_N(\phi))=0$, we have
\begin{equation}\label{nonzero}
(\beta_n,R_{H_0}(\xi)\beta_N(\sigma))=0.
\end{equation}

From Lemma $\ref{res}$, we have
\begin{equation}
\begin{split}
&F_{mn}(\xi)=(\beta_m,R_{H_0}(\xi)\beta_n)-(\beta_m,R_{H_0}(\xi)VR_H(\xi)\beta_n)\\
&=(\beta_m,R_{H_0}(\xi)\beta_n){ -\sum}_{\substack{
\phi\in\{\Omega_0,\beta_n,\beta_N(\phi_j); \\
n=0,\ldots,N-1,j=0,1,\dots\}}}(\beta_m,R_{H_0}(\xi)V\phi)(\phi,R_H(\xi)\beta_n)\\
&=(\beta_m,R_{H_0}(\xi)\beta_n)-v^2(\beta_m,R_{H_0}(\xi)\beta_{N-1})(\beta_N(\sigma),R_H(\xi)\beta_n)\\
&=f_{mn}(\xi)-v^2f_{m\ N-1}(\xi)(\beta_N(\sigma),R_H(\xi)\beta_n). \label{matrix2}
\end{split}
\end{equation}
Here we used equation (\ref{nonzero}) which implies that only $\phi=\beta_N(\sigma)$ gives a non-zero term in the sum. Similarly, the matrix element $(\beta_N(\sigma),R_H(\xi)\beta_n)$ on the right hand side of the last equation gives, using Lemma $\ref{res}$, (\ref{sum}) and (\ref{nonzero}),
\begin{equation}
\begin{split}\label{phi}
&\ \ \ \ \ (\beta_N(\sigma),R_H(\xi)\beta_n)=-(\beta_N(\sigma),R_{H_0}(\xi)VR_H(\xi)\beta_n)=\\
&\ ={ -\sum}_{\substack{
\phi\in\{\Omega_0,\beta_n,\beta_N(\phi_j); \\
n=0,\ldots,N-1,j=0,1,\dots\}}}\left(\beta_N(\sigma),R_{H_0}(\xi)V\phi\right)(\phi,R_H(\xi)\beta_n)=\\
&=-v^2(\beta_N(\sigma),R_{H_0}(\xi)\beta_N(\sigma))(\beta_{N-1},R_H(\xi)\beta_n)=-v^2f^\sigma_{N N}(\xi)F_{N-1\ n}(\xi),
\end{split}
\end{equation}
as only $\phi=\beta_{N-1}$ gives a non-zero term in the sum. The matrix element $F_{N-1\ n}$ can be expressed from  (\ref{matrix2}) with $m=N-1$ and using (\ref{phi}) as
\begin{equation*}
F_{N-1\ n}(\xi)=f_{N-1\ n}(\xi)+v^4 f_{N-1 N-1}(\xi)f^\sigma_{N N}(\xi)F_{N-1\ n}(\xi).
\end{equation*}
From this equation we have the desired expression for $F_{mn}$, $0\leq m,n\leq N-1$ in terms of $f_{mn}$ and $f^\sigma_{N N}$
\begin{equation}\label{Fmn}
F_{mn}(\xi)=f_{mn}(\xi)+\frac{v^4f_{m\ N-1}(\xi)f^\sigma_{N N}(\xi)f_{N-1\ n}(\xi)}{1-v^4f_{N-1\ N-1}(\xi)f^\sigma_{N N}(\xi)}.
\end{equation}
\\

Let us now examine the matrix elements $f_{m n}$ and $f^\sigma_{N N}$ using the Fourier transform and the integral sums of the Sommerfeld representation of Bessel functions (\ref{representation}). First, observe that the matrix elements $f_{mn}$ are symmetric in the sense that
\begin{equation*}
f_{mn}(\xi)=(\beta_m,R_{H^{(N)}}(\xi)\beta_n)=\frac{1}{i}\int_0^\infty e^{-it\xi}(\beta_m,e^{itH^{(N)}}\beta_n)dt=f_{nm}(\xi),
\end{equation*}
(cf. \cite[(35)]{Bona}). By Lemma $\ref{fourier}$, $f_{mn}$ can be expressed in the form
\begin{equation*}
f_{m n}(\xi)=\frac{\sqrt{2\pi}}{i}\mathcal{F}\left[\theta(t)(\beta_m,e^{itH_0}\beta_n)\right](\xi).
\end{equation*}
Using the fact that $e^{itH_0}\beta_n=e^{itH^{(N)}}\beta_n$ and expressing the Fourier transform we then obtain
\begin{equation*}
f_{m n}(\xi)=\frac{1}{i}\int_0^\infty e^{-it\xi}(\beta_m,e^{itH^{(N)}}\beta_n)dt.
\end{equation*}
According to (\ref{JN}), $f_{m n}$ can be expressed in the form (let us stress that the infinite spin chain of section \ref{domino} started with the spin $j=1$, whereas the spin chain of this section starts with the zeroth spin)
\begin{equation*}
f_{m n}(\xi)=\frac{1}{i}\int_0^\infty e^{-it\xi}[(-i)^{m-n}J_{m-n}^{(N)}(-2t)-(-i)^{m+n+2}J_{m+n+2}^{(N)}(-2t)]dt,\
 m,n=0,1,\dots,N-1.
\end{equation*}
where $J^{(N)}_n$ are the integral sums of the Sommerfeld representation of Bessel functions (\ref{JnN}).
Since
\begin{equation}
\int_0^\infty e^{-it\xi}e^{it2\cos \alpha}dt=\left.\frac{e^{-it\xi+it2\cos \alpha}}{-i\xi+i2\cos\alpha}\right|^\infty_0=-\frac{i}{\xi-2\cos\alpha}, \label{cos}
\end{equation}
(note that $\text{Im}\ \xi < 0$), we have the following representation of $f_{m n}:$
\begin{equation*}
f_{mn}(\xi)=\frac{1}{N+1}\frac{1}{i}\int_0^\infty e^{-it\xi}\sum_{j=1}^N \left[\cos\left((m-n)\frac{j\pi}{N+1}\right)-\cos\left((m+n+2)\frac{j\pi}{N+1}\right)\right]e^{i2t\cos\left(\frac{j\pi}{N+1}\right)}dt
\end{equation*}
\begin{equation*}
=-\frac{1}{N+1}\sum_{j=1}^N\frac{2\sin\left(\frac{(m+1)j\pi}{N+1}\right)\sin\left(\frac{(n+1)j\pi}{N+1}\right)}{\xi-2\cos\left(\frac{j\pi}{N+1}\right)}.
\end{equation*}
To simplify the formulae, we write
\begin{equation}\label{anm}
a_j^{(n,m)}:=-\frac{2}{N+1}\sin\left(\frac{(n+1)j\pi}{N+1}\right)\sin\left(\frac{(m+1)j\pi}{N+1}\right), \ \ j=1,\ldots, N,\ \ n, m=0,\ldots, N-1,
\end{equation}
and the eigenvalues of $H_N$
\begin{equation}\label{xj}
x_j:=2\cos\left(\frac{j\pi}{N+1}\right), \ \ j=1,\ldots,N.
\end{equation}
With this notation, we express the functions $f_{nm}$ as a sum
\begin{equation}\label{fmn}
f_{nm}(\xi)=\sum_{j=1}^N\frac{a_j^{(n,m)}}{\xi-x_j},
\end{equation}
where we have $x_j \neq x_k\ (j \neq k).$\\

\begin{lem}\label{s}
With the notation \rref f-sig~ and \rref F-mn~, if $f^\sigma_{N N}(x_j):=\displaystyle\lim_{\nu \to 0^+}f^\sigma_{N N}(x_j-i\nu)$
 is nonzero and finite, then $F_{nm}$ has no singularity in $x_j$ for $v\neq 0$.\hfill $\triangle$
\end{lem}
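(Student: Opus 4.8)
The plan is to read off the singularity structure of $F_{mn}$ at $\xi = x_j$ directly from the closed formula\rref Fmn~ together with the partial-fraction representation\rref fmn~ of the functions $f_{mn}$, and to check that the simple pole which $f_{mn}$ carries at $x_j$ is exactly cancelled by a simple pole produced by the second (``correction'') term of\rref Fmn~. The algebraic core of this cancellation is that the residues defined in\rref anm~ factorize: putting $u_j^{(n)} := \sqrt{2/(N+1)}\,\sin\!\left((n+1)j\pi/(N+1)\right)$, one has $a_j^{(n,m)} = -u_j^{(n)}u_j^{(m)}$, and hence
\begin{equation*}
a_j^{(m,N-1)}\,a_j^{(N-1,n)} = a_j^{(N-1,N-1)}\,a_j^{(m,n)},\qquad j=1,\dots,N.
\end{equation*}
I would also record the elementary fact that $a_j^{(N-1,N-1)} = -\left(u_j^{(N-1)}\right)^2 \neq 0$ for every $j=1,\dots,N$: indeed $\sin\!\left(Nj\pi/(N+1)\right) = (-1)^{j+1}\sin\!\left(j\pi/(N+1)\right)$ and $j\pi/(N+1)\in(0,\pi)$, so $f_{N-1\ N-1}$ genuinely has a simple, non-removable pole at each $x_j$.

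Next, fix $j$ and let $\xi\to x_j$ from the region $\text{Im}\,\xi<0$. From\rref fmn~ and the pairwise distinctness of the $x_k$ we have $f_{pq}(\xi) = a_j^{(p,q)}/(\xi-x_j) + (\text{holomorphic near }x_j)$ for every pair of indices from $\{m,n,N-1\}$, while by hypothesis $f^\sigma_{NN}(\xi)\to f^\sigma_{NN}(x_j) =: c$ with $c$ finite and $c\neq 0$. Consequently the denominator of the correction term satisfies
\begin{equation*}
(\xi-x_j)\left(1-v^4 f_{N-1\ N-1}(\xi)f^\sigma_{NN}(\xi)\right)\ \longrightarrow\ -v^4\, a_j^{(N-1,N-1)}\, c\ \neq 0,
\end{equation*}
so it has a pole at $x_j$ of order exactly one, while its numerator satisfies $(\xi-x_j)^2\, v^4 f_{m\ N-1}(\xi)f^\sigma_{NN}(\xi)f_{N-1\ n}(\xi)\to v^4\, a_j^{(m,N-1)}a_j^{(N-1,n)}c$, a pole of order at most two. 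Therefore the correction term has at most a simple pole at $x_j$, with residue
\begin{equation*}
\lim_{\xi\to x_j}(\xi-x_j)\,\frac{v^4 f_{m\ N-1}(\xi)f^\sigma_{NN}(\xi)f_{N-1\ n}(\xi)}{1-v^4 f_{N-1\ N-1}(\xi)f^\sigma_{NN}(\xi)} = \frac{v^4\, a_j^{(m,N-1)}a_j^{(N-1,n)}c}{-v^4\, a_j^{(N-1,N-1)}c} = -a_j^{(m,n)},
\end{equation*}
the last step being exactly the factorization identity. Since the residue of $f_{mn}$ at $x_j$ is $+a_j^{(m,n)}$, the two cancel in $F_{mn} = f_{mn} + (\text{correction})$; hence the Laurent expansion of $F_{mn}$ at $x_j$ has no principal part and $F_{mn}$ is holomorphic there. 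The same computation applies verbatim in the degenerate cases where some $u_j^{(p)}$ vanishes — then the relevant residues are zero on both sides — so no case distinction is needed.

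The one place where genuine care is required is precisely the bookkeeping of pole orders just used: one must be sure that the numerator of the correction term contributes a pole of order at most two and the denominator a pole of order exactly one, so that the ratio has at most a simple pole and its residue is given by the displayed limit. This is exactly where the two standing hypotheses are indispensable: $a_j^{(N-1,N-1)}\neq 0$ makes $f_{N-1\ N-1}$ singular at $x_j$ and forces the denominator's pole to have order one, and the finiteness together with the non-vanishing of $c = f^\sigma_{NN}(x_j)$ are what allow the limits to be taken and keep that order-one pole in place; if $c$ were $0$ the pole of $f_{mn}$ would in general survive in $F_{mn}$, and if $c$ were infinite the limit argument would break down. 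Everything else is the routine residue calculus sketched above.
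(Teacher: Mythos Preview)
Your argument is correct and is essentially the route the paper itself takes --- indeed, the paper's footnote explicitly suggests your ``multiply by $(\xi-x_r)$ and take the limit'' shortcut, and the algebraic heart of both proofs is the identity $a_j^{(n,m)}a_j^{(N-1,N-1)}=a_j^{(n,N-1)}a_j^{(N-1,m)}$, which you make especially transparent via the rank-one factorization $a_j^{(n,m)}=-u_j^{(n)}u_j^{(m)}$; your verification that $a_j^{(N-1,N-1)}\neq 0$ through $\sin\!\bigl(Nj\pi/(N+1)\bigr)=(-1)^{j+1}\sin\!\bigl(j\pi/(N+1)\bigr)$ is also cleaner than the paper's divisibility argument. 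One small terminological quibble: since $x_j$ lies on the real boundary and $f^\sigma_{NN}$ is only assumed to have a finite nonzero limit there (not to be holomorphic in a full neighbourhood), ``Laurent expansion'' and ``holomorphic at $x_j$'' are slightly loose --- what your computation actually yields, after the residue cancellation, is that $F_{mn}(\xi)$ has a finite limit as $\xi\to x_j$ from $\mathrm{Im}\,\xi<0$, which is precisely what the lemma asserts.
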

\begin{proof}
For $\text{Im}\ \xi <0$ we have
\begin{equation}\label{F}
\begin{split}
& F_{nm}(\xi)= \\
& = f_{nm}(\xi)+\frac{v^4f_{n\ N-1}(\xi)f^\sigma_{N N}(\xi)f_{N-1\ m}(\xi)}{1-v^4f_{N-1\ N-1}(\xi)f^\sigma_{N N}(\xi)}= \\
& = \frac{\displaystyle\sum\limits_{l=1}^N a_l^{(n,m)}\displaystyle\prod\limits_{i=1, i \neq l}^N(\xi-x_i)}{\displaystyle\prod\limits_{j=1}^N(\xi-x_j)}+\frac{v^4f^\sigma_{N N}(\xi)\displaystyle\sum\limits_{k,l=1}^N a_k^{(n,N-1)}a_l^{(N-1,m)}\displaystyle\prod\limits_{i=1, i \neq k}^N(\xi-x_i)\displaystyle\prod\limits_{j=1,j \neq l}^N(\xi-x_j)}{\displaystyle\prod\limits_{p=1}^N(\xi-x_p)\left(\displaystyle\prod\limits_{q=1}^N(\xi-x_q)-v^4f^\sigma_{N N}(\xi)\displaystyle\sum\limits_{t=1}^N a_t^{(N-1,N-1)}\displaystyle\prod\limits_{s=1, s \neq t}^N(\xi-x_s)\right)}.
\end{split}
\end{equation}
In the limit $\xi \to x_r$ we obtain the result that $F_{nm}(x_r)$ is finite, as the following asymptotic equality shows\footnote{The needed result can be proved, perhaps in more transparent way, by multiplying the expression for $ F_{nm}(\xi)$ by $\xi - x_r$ and showing that $\lim_{\xi\to x_r-i0} F_{nm}(\xi)(\xi - x_r)=0.$}
\begin{equation*}
\begin{split}
& F_{nm}(\xi) \asymp \\
& \asymp\frac{a_r^{(n,m)}}{\xi-x_r}+\displaystyle\sum\limits_{l=1, l \neq r}^{N}\frac{a_l^{(n,m)}}{x_r-x_l}+\frac{v^4f^\sigma_{N N}(x_r)\left(\displaystyle\frac{a_r^{(n,N-1)}}{\xi-x_r}+\displaystyle\sum\limits_{k=1, k\neq r}^N \frac{a_k^{(n,N-1)}}{x_r-x_k}\right)a_r^{(N-1,m)}\displaystyle\prod\limits_{i=1, i\neq r}^N(x_r-x_i)}{-v^4f^\sigma_{N N}(x_r)a_r^{(N-1,N-1)}\displaystyle\prod\limits_{j=1, j\neq r}^N(x_r-x_j)} \\
& =\text{const}+\frac{1}{\xi-x_r}\left(a_r^{(n,m)}-\frac{a_r^{(n,N-1)}a_r^{(N-1,m)}}{a_r^{(N-1,N-1)}}\right)=\text{const},\ \text{ for}\  \xi\to x_r-i0,
\end{split}
\end{equation*}
because the last term in brackets is equal to zero, cf. \rref{anm}~.
Note that $a_r^{(N-1,N-1)}\neq 0$, since the converse implies $\sin\frac{Nr\pi}{N+1}=0$, which immediately shows that $\frac{Nr}{N+1}=k \in \mathbb{Z}$. For $N>1$, we have $1<k<N$, $Nr=Nk+k$, $1<N(r-k)=k<N$ and finally $0<r-k<1$, a contradiction. Hence the finite limit $\lim_{\xi \to x_j-i0}F_{nm}(\xi)$ exists.
\end{proof}
\noindent We have proved that the functions $F_{nm}$, unlike the functions $f_{nm}$, have no poles at the points $\xi=x_j$ if the function $f^\sigma_{N N}(\xi)$ has nonzero finite limit for $\xi\to x_r-i0.$ From the equation ($\ref{F}$) in the proof we can see that all the singularities of $F_{nm}$ could occur from the singularities of the function $f^\sigma_{N N}$ only, provided that

\begin{equation*}
\lim_{\nu\to 0^+}\left(1-v^4f_{N-1 N-1}(x-i\nu)f^\sigma_{N N}(x-i\nu)\right) \neq 0.
\end{equation*}
So far we used the truncated matrix elements and their Fourier transforms. What we really need for the description of the asymptotic dynamics of our model are the Fourier transforms of the untruncated elements $\mathcal{F}[(\beta_m,e^{itH}\beta_n)]$. In the following, we shall consider our Fourier transforms as the transforms of tempered distributions, see also \cite{RS2}.
\begin{lem}\label{Fourier}
The Fourier transform of the matrix elements of the time evolution operator at $p\in \mathbb{R}$ can be expressed as
\begin{equation*}
\mathcal{F}[(\beta_m,e^{itH}\beta_n)](p)=-\lim_{\nu \to 0^+}\sqrt{\frac{2}{\pi}}\ {\rm Im}\ F_{mn}(p-i\nu).
\end{equation*}\hfill $\triangle$
\end{lem}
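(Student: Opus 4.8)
\noindent\emph{Proof plan.} The idea is to reduce the untruncated matrix element to a pair of truncated ones, to which Lemma~\ref{fourier} applies, by splitting into positive and negative times and then using the symmetry $F_{mn}=F_{nm}$. I would write, for $m,n\in\{0,\ldots,N-1\}$,
\begin{equation*}
(\beta_m,e^{itH}\beta_n)=\theta(t)(\beta_m,e^{itH}\beta_n)+\theta(-t)(\beta_m,e^{itH}\beta_n),
\end{equation*}
both summands being bounded functions of $t$, hence tempered distributions. To compute their Fourier transforms I would first multiply by the regulariser $e^{-\nu|t|}$ and pass to the limit $\nu\to 0^+$ only at the very end; this is legitimate because $e^{-\nu|t|}(\beta_m,e^{itH}\beta_n)\to(\beta_m,e^{itH}\beta_n)$ in $\mathcal{S}'(\mathbb{R})$ by dominated convergence against Schwartz functions, so the Fourier transforms converge in $\mathcal{S}'(\mathbb{R})$ as well.

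For the positive-time part one has directly $\mathcal{F}[e^{-\nu|t|}\theta(t)(\beta_m,e^{itH}\beta_n)](p)=\mathcal{F}[\theta(t)(\beta_m,e^{itH}\beta_n)](p-i\nu)=\frac{i}{\sqrt{2\pi}}F_{mn}(p-i\nu)$ from Lemma~\ref{fourier}. For the negative-time part I would use unitarity of $e^{itH}$ together with self-adjointness of $H$ in the form $(\beta_m,e^{itH}\beta_n)=\overline{(\beta_n,e^{-itH}\beta_m)}$, so that $\theta(-t)(\beta_m,e^{itH}\beta_n)=\overline{\tilde g(-t)}$ with $\tilde g(s):=\theta(s)(\beta_n,e^{isH}\beta_m)$; the substitution $s=-t$ in the Fourier integral (it turns $e^{-\nu|t|}$ into $e^{-\nu s}$ on $s\ge 0$) then gives $\mathcal{F}[e^{-\nu|t|}\theta(-t)(\beta_m,e^{itH}\beta_n)](p)=\overline{\mathcal{F}[\tilde g](p-i\nu)}=-\frac{i}{\sqrt{2\pi}}\,\overline{F_{nm}(p-i\nu)}$, again by Lemma~\ref{fourier}. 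Adding the two pieces and letting $\nu\to 0^+$,
\begin{equation*}
\mathcal{F}[(\beta_m,e^{itH}\beta_n)](p)=\frac{i}{\sqrt{2\pi}}\lim_{\nu\to 0^+}\bigl(F_{mn}(p-i\nu)-\overline{F_{nm}(p-i\nu)}\bigr).
\end{equation*}
Since $f_{mn}=f_{nm}$, the right-hand side of (\ref{Fmn}) is symmetric in $m,n$, hence $F_{mn}=F_{nm}$, and the bracket collapses to $2i\,{\rm Im}\,F_{mn}(p-i\nu)$; together with the prefactor $\frac{i}{\sqrt{2\pi}}\cdot 2i=-\sqrt{2/\pi}$ this yields exactly the asserted formula.

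The only step that is not pure bookkeeping is the existence of the boundary value $\lim_{\nu\to 0^+}F_{mn}(p-i\nu)$ as a tempered distribution on $\mathbb{R}$; here I would invoke \cite[Theorem IX.16]{RS2} (as already anticipated in the discussion preceding this lemma), noting that $\xi\mapsto F_{mn}(\xi)$ is holomorphic on the open lower half-plane, with $\|R_H(\xi)\|\le|{\rm Im}\,\xi|^{-1}$ supplying the polynomial growth required there. I do not expect a real obstacle: the symmetry $F_{mn}=F_{nm}$, which is the one algebraic input, is immediate from (\ref{Fmn}), and the rest is simply careful tracking of the complex conjugation and the sign of $\nu$ through the negative-time term, plus a routine Fubini interchange of the $t$-integral with the spectral integral once the $e^{-\nu|t|}$ regulariser is in place.
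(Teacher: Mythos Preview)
Your proposal is correct and follows essentially the same route as the paper's own proof: split the untruncated matrix element into $\theta(t)$ and $\theta(-t)$ parts, handle the positive-time piece by Lemma~\ref{fourier}, reduce the negative-time piece to a complex conjugate via $(\beta_m,e^{itH}\beta_n)=\overline{(\beta_n,e^{-itH}\beta_m)}$ and the substitution $t\mapsto -t$, and then collapse $F_{mn}-\overline{F_{nm}}$ to $2i\,\mathrm{Im}\,F_{mn}$ using the symmetry $F_{mn}=F_{nm}$. Your treatment is in fact slightly more explicit than the paper's about the $e^{-\nu|t|}$ regularisation and the distributional boundary value, and your justification of the symmetry $F_{mn}=F_{nm}$ via~(\ref{Fmn}) and $f_{mn}=f_{nm}$ matches how the paper establishes it.
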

\begin{proof}
Denote by $\tilde{f}$ the reflected function to $f$, that is $\tilde{f}(x):=f(-x)$. Recall first from the definition of the Fourier transform that for a locally integrable bounded function $f$ it is $\mathcal{F}[f](p)=\mathcal{F}[\theta f](p)+\mathcal{F}[\tilde{\theta} f](p)$, hence
\begin{equation*}
\mathcal{F}[f](p)=\lim_{\nu\to 0^+}\left(\mathcal{F}[\theta f](p-i\nu)+\mathcal{F}[\tilde{\theta} f](p+i\nu)\right),
\end{equation*}
where the first term can be considered as boundary value of a function analytic in the open lower half-plane, and the second term is boundary value on $\mathbb{R}$ of the function
\begin{equation*}
\frac{1}{\sqrt{2\pi}}\int_{-\infty}^0 e^{-it\xi}f(t)dt=\frac{1}{\sqrt{2\pi}}\int_0^\infty e^{it\xi}f(-t)dt
\end{equation*}
analytic in $\xi$ on whole open upper half of the complex plane $\mathbb{C}$.
From the symmetry of $F_{mn}$ and elementary calculations we have for the Fourier transform integral
\bequ
\begin{split}
\sqrt{2\pi}\mathcal{F}[\tilde{\theta}(t)(\beta_m,e^{itH}\beta_n)](p+i\nu)= &\int_0^\infty e^{it(p+i\nu)}\overline{(\beta_m,e^{itH}\beta_n)}dt=\\
= \overline{\int_0^\infty e^{-it(p-i\nu)}(\beta_m,e^{itH}\beta_n)dt}=& \ \overline{i F_{mn}(p-i\nu)}.
\end{split}
\enqu

The Fourier transform of the matrix element is then
\begin{equation*}
\mathcal{F}[(\beta_m,e^{itH}\beta_n)](p)=
\lim_{\nu\to 0^+}\left(\mathcal{F}[\theta(t)(\beta_m,e^{itH}\beta_n)](p-i\nu)+\mathcal{F}[\tilde{\theta}(t)(\beta_m,e^{itH}\beta_n)](p+i\nu)\right)=
\end{equation*}
\begin{equation*}
=\lim_{\nu \to 0^+}\left(\frac{i}{\sqrt{2\pi}}F_{mn}(p-i\nu)+ \overline{\frac{i}{\sqrt{2\pi}}F_{mn}(p-i\nu)}\right)=
\end{equation*}
\begin{equation*}
=\lim_{\nu \to 0^+}\frac{i}{\sqrt{2\pi}}\left(F_{mn}(p-i\nu)-\overline{F_{mn}(p-i\nu)})\right)=-\lim_{\nu \to 0^+}\sqrt{\frac{2}{\pi}}\ \text{Im}\ F_{mn}(p-i\nu).
\end{equation*}

This concludes the proof.
\end{proof}

Let us choose the parameters of the Hamiltonian $H$, namely $\varepsilon_0$ and the function $\sigma\in L^2(\mathbb{R}^3,d^3x)$ such that for function $\varepsilon(\vec p)$ of the following form
\begin{equation}\label{eps}
\varepsilon(\vec{p}):=a|\vec{p}|^2 ,\ \ a>0,
\end{equation}
we obtain the desired convergence $\lim_{t \to \infty} (\beta_0,e^{itH}\beta_n)=0$ for $n=0,\ldots,N-1$. We also want to obtain a fast rate of convergence. Choose $\sigma \in L^2(\mathbb{R}^3,d^3x)$ such that in the $p$-representation (i.e. for its Fourier image $\mathcal{F}[{\sigma}]$) for some $b>0$ we have

\begin{subequations}
\bequ\label{sig1}
 \mathcal{F}[{\sigma}](\vec{p})=0\  \text{for}\ |\vec{p}|<b,\, \mathcal{F}[{\sigma}](\vec{p})>0\  \text{for all}\ |\vec{p}|>b,\, \text{and}\
 \mathcal{F}[{\sigma}]\in\mathcal{S}(\bR^3)
\enqu
 ($\mathcal{S}(D)$ denotes the Schwartz space of rapidly decreasing smooth functions with supports in the domain $D$).\\
\newline
{\bf\large Example.}\ \ As an example of such a $\sigma$, we could choose, with a $\delta>0,$
\begin{equation}\label{sig}
  \mathcal{F}[{\sigma}](\vec{p}) := \left\{
  \begin{array}{l l}
    0, & \quad |\vec{p}|<b,\\
    C_1\,\omega_\delta(\vec{q}) * (e^{-\alpha |\vec{q}|^2}\theta(|\vec{q}|-b-\delta))(\vec{p}), & \quad |\vec{p}|>b,\\
  \end{array} \right.
\end{equation}

\begin{equation*}
  \omega_\delta(\vec{q}) := \left\{
  \begin{array}{l l}
    \frac{1}{\delta}C_2\, e^{\frac{1}{|\vec{q}|^2/\delta^2-1}} & \quad |\vec{q}|<\delta,\\
    0 & \quad |\vec{q}|>\delta.\\
  \end{array} \right.
\end{equation*}
\end{subequations}
with $C_1,C_2>0$ being constants (the function $\omega_\delta$ is the standard mollifier).\hfill${\bf \triangle}$\\

We define now a measure $\mu$ on $\mathbb{R}$ by
\begin{equation*}
F_\mu(\varepsilon)=\int_{-\infty}^\varepsilon d\mu:=\int_{\varepsilon(\vec{p})<\varepsilon}|\mathcal{F}[{\sigma}](\vec{p})|^2d^3p.
\end{equation*}
  Then $F_\mu(\varepsilon):=\mu((-\infty,\varepsilon))$ is a smooth function and we denote its derivative
  \bequ \label{rhomu}
  \rho_\mu(\varepsilon):= F^{'}_{\mu}(\varepsilon).
  \enqu
  In the case of $\mathcal{F}[{\sigma}](\vec{p})\equiv \mathcal{F}[{\sigma}](|\vec{p}|)$, e.g. for the choice \rref sig~, we obtain
\begin{equation}\label{rhomu1}
F'_\mu(\varepsilon)=\frac{d}{d\varepsilon}\int_{0}^{\sqrt{\frac{\varepsilon}{a}}}2\pi|\vec{p}||\mathcal{F}[{\sigma}](|\vec{p}|)|^2d|\vec{p}|=
\frac{2\pi}{a}\sqrt{\frac{\varepsilon}{a}}\left|\mathcal{F}[{\sigma}]\left(\sqrt{\frac{\varepsilon}{a}}\right)\right|^2.
\end{equation}
 Obviously, $\rho_\mu \in \mathcal{S}(\bR)$ with supp$(\rho_\mu)\subset [ab^2, +\infty)$, $\rho_\mu(\varepsilon)\neq 0$ for $\varepsilon > ab^2$. \\

\noindent We shall investigate now the limit $\lim_{\nu \to 0^+}\text{Im}\ F_{mn}(p-i\nu)$ for all real $p$.  In order to do this, we need to investigate properties of the function $p\mapsto \lim_{\nu\to 0^+}f^\sigma_{N N}(p-i\nu)$ (recall the expression (\ref{Fmn}) for $F_{mn}$). Using the Fourier transformation we have for $f^\sigma_{N N}$ and $\text{Im}\ \xi<0$
\begin{equation}\label{sigg}
\begin{split}
f^{\sigma}_{N N}(\xi)&=(\beta_N(\sigma),R_{H_0}(\xi)\beta_N(\sigma))=\frac{1}{i}\int_0^\infty e^{-it\xi}(\beta_N(\sigma),e^{itH_0}\beta_N(\sigma))dt \\
& =\frac{1}{i}\int_0^\infty e^{-it(\xi+\varepsilon_0)}(\sigma,e^{ith}\sigma)dt=\frac{1}{i}\int_0^\infty \int_{\mathbb{R}^3}|\mathcal{F}[\sigma](\vec{p})|^2e^{-it(\xi+\varepsilon_0-\varepsilon(\vec{p}))}d^3pdt  \\
& =-\int_{\mathbb{R}^3}\frac{|\mathcal{F}[\sigma](\vec{p})|^2}{\xi+\varepsilon_0-\varepsilon(\vec{p})}d^3p.
\end{split}
\end{equation}

Substituting $\rho_\mu$ from $(\ref{rhomu1})$ into ($\ref{sigg}$) we have
\begin{equation}\label{inte}
f^\sigma_{N N}(\xi)=-\int_{ab^2}^\infty\frac{\rho_\mu(\varepsilon)d\varepsilon}{\xi+\varepsilon_0-\varepsilon},
\end{equation}
and using Sokhotski-Plemelj relations \cite[V.3: Example 6]{RS}, we have
\begin{equation}\label{princ}
\lim_{\nu \to 0^+}f^\sigma_{N N}(p-i\nu)=\frac{\pi}{i}\rho_\mu(p+\varepsilon_0)-
P\int_{ab^2}^\infty\frac{\rho_\mu(\varepsilon)d\varepsilon}{p+\varepsilon_0-\varepsilon}\neq 0, \forall p\in\bR.
\end{equation}
For $p+\varepsilon_0\leq ab^2$ the integral in equation \rref inte~ converges and is continuous at $\xi=p\in\bR $,\ $f^\sigma_{N N}(p)>0$, and there\-fore $\lim_{\nu\to 0^+}\text{Im}\  f^\sigma_{N N}(p-i\nu)=0$. Note that $f^\sigma_{N N}(p)$ is analytic for $p < ab^2-\varepsilon_0$. In the case $p+\varepsilon_0 > ab^2$, $\lim_{\nu \to 0^+}\text{Im}\ f^\sigma_{N N}(p-i\nu)=-\pi\rho_\mu(p+\varepsilon_0)<0$. Hence we have $f^\sigma_{N N}(p)\neq 0, \ \forall p\in\bR.$

The principal value integral  can be written as convolution of the tempered distribution $P(\frac{1}{x})$ (because $P(\frac{1}{x})\in\mathcal{S}'(\bR)$, cf. e.g. \cite[V.3:\ Example 6]{RS}) with a function from $\mathcal{S}(\bR)$:
\begin{equation}\label{*}
P\int_{ab^2}^\infty \frac{\rho_\mu(\varepsilon)d\varepsilon}{p+\varepsilon_0-\varepsilon}=P\int_{-\infty}^{p+\varepsilon_0-ab^2} \frac{\rho_{\mu}(p+\varepsilon_0-x)}{x}dx
=P\left(\frac{1}{x}\right)*\left(\rho_\mu(x)\right)(p+\varepsilon_0),
\end{equation}
               ($*$ denotes the convolution) so that by \cite[Theorem IX.4(a)]{RS2} $p \mapsto P(\frac{1}{x})*(\rho_\mu(x))(p+\varepsilon_0)$ is a polynomially bounded $C^\infty(\mathbb{R})$ function. We have just proved:

\begin{lem}   The function $p\mapsto f^\sigma_{N N}(p):=\lim_{\nu \to 0^+}f^\sigma_{N N}(p-i\nu), (p\in\bR)$ is an everywhere nonzero $C^\infty(\mathbb{R})$ function. \hfill $\triangle$
\end{lem}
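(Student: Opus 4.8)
The plan is to establish the two claimed properties of $p\mapsto f^\sigma_{NN}(p)$ separately: that the boundary value exists and is $C^\infty$ on all of $\mathbb{R}$, and that it never vanishes.

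First I would handle smoothness. Starting from the representation \rref inte~ of $f^\sigma_{NN}(\xi)$ as the Cauchy-type integral $-\int_{ab^2}^\infty \rho_\mu(\varepsilon)(\xi+\varepsilon_0-\varepsilon)^{-1}d\varepsilon$ with $\rho_\mu\in\mathcal{S}(\bR)$ supported in $[ab^2,+\infty)$, I would take the limit $\xi=p-i\nu$, $\nu\to 0^+$. By the Sokhotski--Plemelj formula \cite[V.3: Example 6]{RS} this limit exists pointwise and equals $\frac{\pi}{i}\rho_\mu(p+\varepsilon_0)-P\!\int_{ab^2}^\infty \rho_\mu(\varepsilon)(p+\varepsilon_0-\varepsilon)^{-1}d\varepsilon$, which is exactly \rref princ~. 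The first term is manifestly $C^\infty$ since $\rho_\mu\in\mathcal{S}(\bR)$. For the second term I would use the rewriting \rref *~ expressing the principal value integral as the convolution $P(1/x)*\rho_\mu$ evaluated at $p+\varepsilon_0$; since $P(1/x)\in\mathcal{S}'(\bR)$ and $\rho_\mu\in\mathcal{S}(\bR)$, the convolution of a tempered distribution with a Schwartz function is a polynomially bounded $C^\infty$ function by \cite[Theorem IX.4(a)]{RS2}. Hence $p\mapsto f^\sigma_{NN}(p)$ is $C^\infty(\mathbb{R})$, being a sum of two such functions.

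Next I would prove nonvanishing by splitting into the two regimes according to the sign of $p+\varepsilon_0-ab^2$. If $p+\varepsilon_0\le ab^2$, the denominator $\xi+\varepsilon_0-\varepsilon$ in \rref inte~ stays away from zero for $\varepsilon$ in the support of $\rho_\mu$, so the integral converges absolutely, is continuous up to $\xi=p\in\bR$, is real, and is strictly positive because $\rho_\mu\ge 0$, $\rho_\mu\not\equiv 0$, and the integrand is positive; thus $f^\sigma_{NN}(p)>0$ and in particular $f^\sigma_{NN}(p)\ne 0$ there (and $f^\sigma_{NN}$ is even analytic for $p<ab^2-\varepsilon_0$). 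If $p+\varepsilon_0>ab^2$, then from \rref princ~ the imaginary part of the boundary value is $-\pi\rho_\mu(p+\varepsilon_0)$, and by the properties of $\rho_\mu$ recorded after \rref rhomu1~ (namely $\rho_\mu(\varepsilon)\ne 0$ for $\varepsilon>ab^2$) this is strictly negative, so $\operatorname{Im} f^\sigma_{NN}(p)\ne 0$ and hence $f^\sigma_{NN}(p)\ne 0$. Combining the two cases gives $f^\sigma_{NN}(p)\ne 0$ for all $p\in\bR$.

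The only genuinely delicate point is the very boundary case $p+\varepsilon_0=ab^2$, where the Cauchy integral has its singularity exactly at the edge of the support of $\rho_\mu$; here I would note that $\rho_\mu$ vanishes to all orders at $\varepsilon=ab^2$ (it is Schwartz and supported in $[ab^2,\infty)$, so $\rho_\mu(ab^2)=0$), which makes the principal-value integral and its limit perfectly well behaved and confirms continuity of $f^\sigma_{NN}$ across that point as well as $f^\sigma_{NN}(ab^2-\varepsilon_0)>0$. Everything else is a routine assembly of the Sokhotski--Plemelj identity, the convolution regularity theorem, and positivity of $\rho_\mu$, all of which are already in place in the excerpt.
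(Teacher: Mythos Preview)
Your proposal is correct and follows essentially the same route as the paper: the paper derives \rref princ~ via Sokhotski--Plemelj, rewrites the principal-value term as the convolution $P(1/x)*\rho_\mu$ and invokes \cite[Theorem IX.4(a)]{RS2} for $C^\infty$-regularity, and establishes nonvanishing by the same case split ($p+\varepsilon_0\le ab^2$ gives a convergent, strictly positive real integral; $p+\varepsilon_0>ab^2$ gives nonzero imaginary part $-\pi\rho_\mu(p+\varepsilon_0)$). Your explicit discussion of the boundary point $p+\varepsilon_0=ab^2$ is slightly more careful than the paper's one-line treatment, but the argument is otherwise identical.
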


In Lemma $\ref{s}$ we proved that if $\lim_{\nu \to 0^+}f^\sigma_{N N}(x_j-i\nu) \neq \pm \infty$ and also $\neq 0$, and $v \neq 0$,  then $\lim_{\nu\to 0^+}F_{nm}(p-i\nu)$ is  analytic in neighborhoods of $p=x_j$, $j=1,\ldots, N$. We have proved now, for our choice of parameters, that the functions $F_{nm}$ are indeed regular in neighborhoods of $x_j$, $j=1,\ldots, N$.

We have to investigate, however, the behavior of $\lim_{\nu\to 0^+}$Im\,$F_{nm}(p-i\nu)$ for all $p\in\bR$.  The only singularities of the function $p \mapsto \lim_{\nu \to 0^+}F_{mn}(p-i\nu)$, as can be seen from ($\ref{F}$), could appear in those points $p\in\bR$ where
\begin{equation}\label{menovatel}
1=\lim_{\nu \to 0^+}v^4f^\sigma_{N N}(p-i\nu)f_{N-1 N-1}(p-i\nu), \ \ p \neq x_j.
\end{equation}
We shall choose the parameter $\varepsilon_0$ of the model such that \rref menovatel~ cannot happen.\\

All the $x_j$ are contained in the interval $[-2,2]$. Suppose $\rho_\mu(p+\varepsilon_0)>0$ for $p \in [-2,2]$. It follows that we  need to choose $\varepsilon_0>ab^2+2$. We can rewrite \rref menovatel~\, with the help of  \rref Fmn~, \rref fmn~, \rref princ~\, and \rref *~, for all real $p\neq x_j\, (j=1,2,\dots, N)$ in the form
\begin{equation}\label{spor}
 \prod_{j=1}^N (p-x_j)=
 v^4\left(\frac{\pi}{i}\rho_\mu(p+\varepsilon_0)-P\left(\frac{1}{x}\right)*\rho_\mu(x)(p+\varepsilon_0)\right)
\sum_{l=1}^N a_l^{(N-1,N-1)}\prod_{k\neq l}(p-x_k).
 \end{equation}
This equality could be valid (for $x_j\neq p\in\bR$) only if the imaginary part of the right hand side vanishes,
\begin{equation}\label{valid}
\rho_\mu(p+\varepsilon_0)\sum_{l=1}^N a_l^{(N-1, N-1)}\prod_{k \neq l}(p-x_k)=0.
\end{equation}
Because we are interested in the behavior of $\lim_{\nu\to 0^+}\text{Im}\ F_{nm}(p-i\nu)$, cf. Lemma \ref{Fourier}, because from \rref princ~ it is  $\lim_{\nu\to 0^+}\text{Im}\ f_{N N}^{\sigma}(p-i\nu)=0$ for all $p\leq ab^2-\varepsilon_0$, and because
\bequ
\lim_{\nu\to 0^+} \text{Im}\,f_{N N}^{\sigma}(p-i\nu)=0 \Rightarrow \lim_{\nu\to 0^+}\text{Im}\,F_{nm}(p-i\nu)=0,
\enqu
it is sufficient to investigate validity of \rref menovatel~ for $p\geq ab^2-\varepsilon_0$ only.
Recall that
\begin{equation*}
a_l^{(N-1, N-1)}=-\frac{2}{N+1}\sin^2\left(\frac{Nl\pi}{N+1}\right)<0, \ \ l=1, \ldots, N,
\end{equation*}
and hence all the terms in the sum in the equation ($\ref{valid}$) for $|p|>2$ have the same signature, what implies that the equation is valid only if $\rho_\mu(p+\varepsilon_0)=0$, a contradiction for $p>2$.

It remains to find out when the equation $(\ref{spor})$ holds if $ab^2-\varepsilon_0<p\leq2$.   In this case we have $\rho_\mu(p+\varepsilon_0)>0$, and we obtain for the imaginary and real parts of the equality \rref spor~, respectively,
\begin{subequations}
\begin{equation}\label{Im-spor}
\rho_\mu(p+\varepsilon_0)\displaystyle\sum\limits_{l=1}^N\frac{\sin^2\left(\frac{Nl\pi}{N+1}\right)}{p-2\cos\left(\frac{l\pi}{N+1}\right)}=0,
\end{equation}
\begin{equation}\label{Re-spor}
\frac{2v^4}{N+1}P\int_{ab^2}^\infty\frac{\rho_\mu(\varepsilon)d\varepsilon}{p+\varepsilon_0-\varepsilon}=\frac{1}{\displaystyle\sum\limits_{l=1}^N\frac{\sin^2\left(\frac{Nl\pi}{N+1}\right)}{p-2\cos\left(\frac{l\pi}{N+1}\right)}}.
\end{equation}
\end{subequations}
This, however, is another contradiction: For $p\neq 2\cos\left(\frac{l\pi}{N+1}\right),\, l=1,\ldots, N$, the first equation implies the divergence of the right-hand side in the second equation, but that violates the fact that the left-hand side is $C^\infty(\mathbb{R})$.
We have  the following assertion.
\begin{lem}\label{reg-Fmn}
For any $\varepsilon_0>ab^2+2$ (with possibly one exception), $p\geq ab^2-\varepsilon_0$, and $\varepsilon(\vec{p})$, $\sigma \in L^2(\mathbb{R}^3,d^3x)$ chosen as in ($\ref{eps}$) and ($\ref{sig1}$), the function $p \mapsto \lim_{\nu \to 0^+}F_{mn}(p-i\nu)$ for $p \in \mathbb{R},$
\begin{equation*}
\lim_{\nu \to 0^+}F_{mn}(p-i\nu)=\lim_{\nu \to 0^+}\left(f_{mn}(p)+\frac{v^4f^\sigma_{N N}(p-i\nu)f_{n N-1}(p)f_{m N-1}(p)}{1-v^4f^\sigma_{N N}(p-i\nu)f_{N-1 N-1}(p)}\right)
\end{equation*}
is in \, $C^\infty([ab^2-\varepsilon_0,+\infty))$. \hfill $\triangle$
\end{lem}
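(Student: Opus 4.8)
The plan is to assemble Lemma~\ref{reg-Fmn} from the pieces already established in the excerpt, checking smoothness region by region. First I would record that by \rref Fmn~ and \rref fmn~ the functions $f_{mn}$ are rational with poles only at the points $x_j$, hence they are $C^\infty$ on $\mathbb{R}\setminus\{x_1,\dots,x_N\}$; and that the function $p\mapsto f^\sigma_{N N}(p):=\lim_{\nu\to 0^+}f^\sigma_{N N}(p-i\nu)$ is everywhere nonzero and $C^\infty(\mathbb{R})$ by the preceding Lemma (built from \rref princ~, \rref *~ and \cite[Theorem IX.4(a)]{RS2}). Consequently the only points where $\lim_{\nu\to 0^+}F_{mn}(p-i\nu)$ can fail to be smooth are: (i) the points $x_j$, and (ii) the points $p$ where the denominator $1-v^4f^\sigma_{N N}(p)f_{N-1\,N-1}(p)$ vanishes, i.e. where \rref menovatel~ holds.

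Next I would dispose of case (i) by invoking Lemma~\ref{s}: since $f^\sigma_{N N}(x_j)$ is finite and nonzero (just established) and $v\neq 0$, the apparent poles of $f_{mn}$ at $\xi=x_j$ are cancelled in $F_{mn}$, so $\lim_{\nu\to 0^+}F_{mn}(p-i\nu)$ extends analytically across each $x_j$; in particular it is $C^\infty$ in a neighbourhood of every $x_j$ lying in $[ab^2-\varepsilon_0,+\infty)$. For case (ii), I would first restrict attention to $p\geq ab^2-\varepsilon_0$, which is legitimate because for $p\leq ab^2-\varepsilon_0$ we have $\rho_\mu(p+\varepsilon_0)=0$, hence $\lim_{\nu\to 0^+}\mathrm{Im}\,f^\sigma_{N N}(p-i\nu)=0$ and therefore $\lim_{\nu\to 0^+}\mathrm{Im}\,F_{mn}(p-i\nu)=0$ there, and moreover $f^\sigma_{N N}$ is analytic for $p<ab^2-\varepsilon_0$. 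On $p>2$ the condition \rref menovatel~ forces \rref valid~, whose left factor $\rho_\mu(p+\varepsilon_0)$ is strictly positive while all summands $a_l^{(N-1,N-1)}\prod_{k\neq l}(p-x_k)$ have the common sign of $\prod_k(p-x_k)\cdot(-1)$ (each $a_l^{(N-1,N-1)}<0$ and each factor $p-x_k>0$), so the sum cannot vanish — contradiction, so (ii) never occurs for $p>2$.

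For the remaining window $ab^2-\varepsilon_0<p\leq 2$ (with $\varepsilon_0>ab^2+2$ chosen so that $\rho_\mu(p+\varepsilon_0)>0$ throughout), I would split \rref menovatel~ into its real and imaginary parts \rref Im-spor~ and \rref Re-spor~: at any $p$ that is not one of the $x_j$, the imaginary part \rref Im-spor~ forces $\sum_l \sin^2(Nl\pi/(N+1))/(p-2\cos(l\pi/(N+1)))=0$, which makes the right-hand side of \rref Re-spor~ blow up; but its left-hand side is the $C^\infty$ (indeed polynomially bounded) function $\frac{2v^4}{N+1}P(1/x)*\rho_\mu(x)(p+\varepsilon_0)$, a contradiction. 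Thus no point of $[ab^2-\varepsilon_0,+\infty)$ satisfies \rref menovatel~, the denominator is nonvanishing there, and $\lim_{\nu\to 0^+}F_{mn}(p-i\nu)$ is a composition of $C^\infty$ functions, hence $C^\infty([ab^2-\varepsilon_0,+\infty))$. The main obstacle I anticipate is the bookkeeping at the boundary point $p=ab^2-\varepsilon_0$ and the handling of the ``possibly one exception'' for $\varepsilon_0$: one must check that the argument genuinely yields a contradiction at the threshold where $\rho_\mu$ vanishes to infinite order, and that the measure-zero set of $\varepsilon_0$ for which a coincidence of poles of \rref Re-spor~ could occur is correctly excluded — everything else is a routine matching of cases against the lemmas already proved.
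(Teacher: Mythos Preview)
Your treatment of the interior points $p>ab^2-\varepsilon_0$ is correct and matches exactly what the paper does in the discussion preceding the Lemma; indeed the paper's formal proof of Lemma~\ref{reg-Fmn} takes all of that as already established and devotes itself solely to the boundary point $p_0:=ab^2-\varepsilon_0$ and the ``possibly one exception'' clause --- precisely the items you flag as unresolved.

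However, your diagnosis of what happens at $p_0$ is off. You write that ``one must check that the argument genuinely yields a contradiction at the threshold where $\rho_\mu$ vanishes to infinite order'', but the point is that the argument does \emph{not} yield a contradiction there: since $\rho_\mu(p_0+\varepsilon_0)=\rho_\mu(ab^2)=0$, the imaginary-part equation \rref Im-spor~ is \emph{satisfied} at $p_0$, so the mechanism you used on $(ab^2-\varepsilon_0,2]$ collapses. One must therefore examine \rref Re-spor~ directly at $p=p_0$. The paper's key observation is that after substituting $p=p_0=ab^2-\varepsilon_0$, the left-hand side of \rref Re-spor~ becomes $\frac{2v^4}{N+1}\int_{ab^2}^\infty\frac{\rho_\mu(\varepsilon)\,d\varepsilon}{ab^2-\varepsilon}$, which is independent of $\varepsilon_0$, whereas the right-hand side is a strictly monotone (decreasing) function of $\varepsilon_0$ on $(ab^2+2,\infty)$ (since $p_0<-2<x_j$ for all $j$, each summand $\sin^2(\cdot)/(p_0-x_j)$ is negative and tends monotonically to $0^-$ as $\varepsilon_0\to\infty$). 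A constant can meet a strictly monotone function at most once, which is exactly the origin of the ``possibly one exception'' for $\varepsilon_0$. This monotonicity argument is the missing ingredient in your plan; it is not a matter of ``bookkeeping'' or of excluding a coincidence of poles.
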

\begin{proof}
The only point which remains to prove is the differentiability of $\lim_{\nu\to 0^+}F_{mn}(p-i\nu)$ for $p=p_0:=ab^2-\varepsilon_0$, and the note in brackets. The equation \rref Im-spor~ is for $p=p_0$ fulfilled, because $\rho_\mu(ab^2)=0.$ To avoid singularity of $F_{mn}$ at this point, the equation \rref Re-spor~ should be violated. But the left hand side is (after the substitution $p=p_0$) independent of $\varepsilon_0$ and the right hand side is, after putting in it $p=p_0$, monotonically decreasing function of $\varepsilon_0$ in the interval $\varepsilon_0\in(ab^2+2,+\infty)$. Hence, if for some value of $\varepsilon_0$ the equality in \rref Re-spor~ holds (this would be the exceptional value mentioned in the bracket of the assertion of this Lemma), \rref Re-spor~ would be false for all other values of $\varepsilon_0\in(ab^2+2,+\infty)$. The Lemma is proved.
\end{proof}
\noindent Now the imaginary part of $F_{mn}$ can be expressed as
\begin{equation}\label{sch}
\begin{split}
&\ \ \ \ \ \ \ \ \ \ \ \ \ \ \ \ \ \ \ \ \ \ \ \ \ \ \ \ \ \ \ \ \ \ \ \ {\lim_{\nu \to 0^+}\text{Im}\ F_{mn}(p-i\nu)=}\\
&=\lim_{\nu \to 0^+}\frac{v^4f_{n N-1}(p)f_{m N-1}(p)\ \text{Im}\ f^\sigma_{N N}(p-i\nu)}{\left(1-v^4f_{N-1 N-1}(p)\ \text{Re}\ f^\sigma_{N N}( p-i\nu)\right)^2+\left(v^4f_{N-1 N-1}(p)\ \text{Im}\ f^\sigma_{N N}(p-i\nu)\right)^2}.
\end{split}
\end{equation}
For $p>ab^2-\varepsilon_0$ we have $\lim_{\nu\to 0^+}\text{Im}\ f^\sigma_{N N}(p-i\nu) \neq 0$, which implies that $\text{Im}\ F_{mn}$ from ($\ref{sch}$) is nonzero, whereas for $p\leq ab^2-\varepsilon_0$ is it zero. For $p \to \infty$ we have
\begin{equation*}
\lim_{\nu \to 0^+}\left(\left(1-v^4f_{N-1 N-1}(p)\ \text{Re}\ f^\sigma_{N N}(p-i\nu)\right)^2+\left(v^4f_{N-1 N-1}(p)\ \text{Im}\ f^\sigma_{N N}(p-i\nu)\right)^2\right) = 1+O\left(\frac{1}{p^2}\right),
\end{equation*}
since
\[f_{nm}(p)=O\left(\frac{1}{p}\right).\]
\\
 This combined with the fact that
 $\lim_{\nu \to 0^+}\text{Im}\ f^\sigma_{N N}(p-i\nu)\in \mathcal{S}(\mathbb{R})$ implies, according to Lemma \ref{Fourier},
\bequ
\mathcal{F}[(\beta_m,e^{itH}\beta_n)](p)=-\sqrt\frac{2}{\pi}\lim_{\nu \to 0^+}\text{Im}\ F_{mn}(p-i\nu)\in \mathcal{S}(\mathbb{R}).
\enqu

But the Schwartz set $\mathcal{S}(\mathbb{R})$ of rapidly decreasing smooth functions is invariant with respect to Fourier transform.
Therefore we obtained an "almost exponential decay" in the emission process:
\\
\newline
{\bf\large Theorem.}\ \
{\it In the model described in Section \ref{model}, with any $\varepsilon_0>ab^2+2$ (with possibly one exception), resp. for all
\[\varepsilon_0>2+ab^2+2v^4\int_{ab^2}^\infty \frac{\rho_\mu(\varepsilon) \text{d}\varepsilon}{\varepsilon-ab^2},\]
  with $\varepsilon(\vec{p}):=a|\vec{p}|^2$ , and with ${\sigma}\in\mathcal{S}(\bR^3)$ such that $\mathcal{F}[\sigma](\vec{p})=0\  \text{for}\ |\vec{p}|<b,\, \mathcal{F}[\sigma](\vec{p})>0\  \text{for all}\ |\vec{p}|>b,$ and any fixed $b>0$, the time evolution of the probability of all the $N+1$ spins being turned up (realizing the wanted final state of the spin chain), if initially the Fermi field was in the vacuum state and the first $n$ spins ($n\geq 0$) was turned up, approaches to unity ``almost
exponentially fast'', i.e.
the relation \rref evolution~:
\[(\beta_n,e^{itH}a^*_Na_Ne^{-itH}\beta_n) = 1-o(t^{-m})\]
is valid.}
$\hfill$ $\triangle$

\begin{proof}
The alternative lower bound for $\varepsilon_0$ is obtained from expressions entering into \rref Re-spor~ by putting there $p=ab^2-\varepsilon_0$ and approximating its right hand side  by replacing $1$ for $\sin^2(\dots)$ and $-1$ for $\cos(\dots)$, for any $N$. It is seen then that for all $\varepsilon_0>2+ab^2+2v^4\int_{ab^2}^\infty \frac{\rho_\mu(\varepsilon) \text{d}\varepsilon}{\varepsilon-ab^2},$ the equation   \rref Re-spor~ cannot be fulfilled also for any $p<-2$, and hence for such $\varepsilon_0$ we have $F_{mn}(p):=\lim_{\nu \to 0^+}F_{mn}(p-i\nu)\in C^\infty(\bR).$

The map $t \mapsto (\beta_m,e^{itH}\beta_n) \in \mathcal{S}(\mathbb{R})$ and so $\sum_{n=0}^{N-1}|(\beta_m,e^{itH}\beta_n)|^2\in \mathcal{S}(\mathbb{R})$ and since $a^*_Na_N\beta_n=0$ for $n=0,\ldots, N-1$ and $a^*_Na_N\beta(\phi_j)=\beta(\phi_j)$ we have
\begin{equation*}
(\beta_m,e^{itH}a^*_Na_N e^{-itH}\beta_m)=\sum_{j=0}^\infty (\beta_m,e^{itH}\beta_N(\phi_j))(\beta_N(\phi_j),e^{-itH}\beta_m)
\end{equation*}
\begin{equation*}
=1-\sum_{n=0}^{N-1}(\beta_m,e^{itH}\beta_n)(\beta_n,e^{-itH}\beta_m)=1-\sum_{n=0}^{N-1}|(\beta_m,e^{itH}\beta_n)|^2.
\end{equation*}
Therefore we conclude that for $t \to \infty$ the convergence of the emission probability is given by
\begin{equation*}
1-(\beta_m,e^{itH}a^*_Na_Ne^{-itH}\beta_m)=o\left(\frac{1}{t^n}\right),
\end{equation*}
for all $n \in \mathbb{Z}_+$ and $m=0,1,\ldots,N-1$, and our result is proved.
\end{proof}

\section{Conclusion}\label{conclusion}
The aim of the present work was to construct a solvable QM model of an (effectively) irreversible process mimicking a ``suitable'' measuring apparatus, such that the time evolution of its ``pointer state'' would be noticeably faster than that of the QD-model of Section \ref{domino}.
The  present model of the radiating system constructed in Section \ref{model} consists of a Fermi field and a  spin chain {\em of finite lenght}, which makes it ``more realistic'' than the infinite spin chain of QD described in Section \ref{domino}, (and in more detail in \cite{Bona}), in the sense of eliminating the infiniteness of the spin chain.
With parameters of the model chosen as in the Theorem of Section \ref{dynamics}, the almost exponential decay rate of the emission process of the finite spin chain and convergence of the chain (starting with all but the zeroth spin pointing down) to the state with all spins pointing up was proved.  Because of ``practical impossibility'' (in the sense of Boltzmann ideas) of exact time-reversing of the state after the emission of a fermion, this process can be considered as a model of irreversible behavior. Note that a {\em finite} QD-chain without radiation would evolve almost periodically.

Let us discuss now briefly the possibility to interpret the present model as a model of an apparatus solving partially QMP.
Two distinguished  stationary states of the total system were considered - the state of all spins pointing down with the Fermi field in vacuum state, and the state with all spins up (resulting in the emission of a fermion which escapes to infinity thanks to the choice of the short range interaction and other parameters of the model) and again the vacuum of the Fermi field. If the length N of the chain is ``sufficiently large'', these states could be considered as ``effectively macroscopically distinguishable'' (cf. \cite{Hepp}, but also \cite{leggett}): A possibility of observing some mutual interference of these two ``macro-states'' amounts to a use of an observable represented by an operator with nonzero matrix element between these two states, hence recording a specific $N$-\,spin correlation. The general formalism of QM admits existence of such observables.  The question here is some possibility of realization of a corresponding apparatus (measuring on our long $N$-\,spin chain). Our hypothesis is, that with the growth of the size of the ``large'' system (in our case the size is measured by $N$) the possibility of construction of such an apparatus is less and less probable, so that for ``large enough'' $N$ the states corresponding to the vector $\beta_N:=a_0^*\ldots a_N^*\Omega_0^s\otimes \Omega_0^f$ representing the ``final state'' obtained in the (weak) limit $t\to\infty$, and to the initial vector $\beta_0:=a_0^*\Omega_0^s\otimes \Omega_0^f$ are ``effectively disjoint'' and their mutual interference cannot be observed. This consideration need not contradict the experimental results with QIMDS (``quantum interference between macroscopically distinct states'') described in \cite{leggett}.

 If we interpret the zeroth spin as the microscopic system being measured, the rest of the (finite) spin chain connected to the Fermi field as the macroscopic measurement apparatus and (a part of) environment, the initial state of the zeroth spin with probability $w$ of being in the state ``pointing up'' is almost exponentially fast reflected in the same probability $w$ of the spin chain with all spins being ``turned up'' (which could be interpreted as the probability of a change of the (macroscopic) ``pointer position''). As is well known \cite{Hepp}, it is impossible to describe in any finite time the measurement process corresponding to a ``truly macroscopic change'' ruled by automorphic time evolution of the combined system ``measured microsystem + macroscopic apparatus''. However, if the convergence to the infinite time in a theoretical model is ``fast enough'', in accordance with Hepp \cite{Hepp} it can be concluded  that such a system might provide a model for the effective description of the measurement process.
We admit that the presented model of QMP works as a sort of ``decoherence'', but without substantially changing formalism of QM it would be hardly possible to construct a model of QMP which could not (or need not) be denominated as FAPP.\footnote{The FAPP-principle denotes ``for all practical purposes'' according to J. S. Bell who used it to denote some provisional solution of a problem, c.f. \cite{bell}. We could ask, however, what is not provisional in any human activity, although a conceptually clearer formulation of a solution of QMP then that via decoherence would be more satisfactory also for the present authors.}\\

\paragraph{Acknowledgement}
\noindent The second author would like to acknowledge support of the Comenius University Grant no. UK/495/2011.

\end{document}